\documentclass[11pt]{article}
\usepackage[english]{babel}
\usepackage[utf8x]{inputenc}
\usepackage[T1]{fontenc}

\usepackage[mathlines]{lineno}

\usepackage[letterpaper, portrait, margin=1in,top=1in,bottom=1.5in]{geometry}

\usepackage{amsmath}
\usepackage{algorithm}
\usepackage{algpseudocode}
\usepackage{bbm}
\usepackage{amssymb}
\usepackage{amsthm}
\usepackage{amsfonts}
\usepackage{mathrsfs}
\usepackage{tikz}
\usepackage{graphicx}
\usepackage{textcomp}
\usepackage[shortlabels]{enumitem} 
\usepackage{graphicx,subcaption}

\newtheorem{theorem}{Theorem}[section]

\newtheorem{proposition}[theorem]{Proposition}
\newtheorem{conj}{Conjecture}[section]
\newtheorem{defn}{Definition}[section]
\theoremstyle{remark}
\newtheorem{remark}{Remark}

\newcommand{\yhat}{\hat{\mathbf{y}}}

\allowdisplaybreaks

\title{Effects of Backtracking on PageRank}
\author{Cory Glover\footnote{Network Science Institute, Northeastern University, Boston, MA \emph{glover.co@northeastern.edu}}, Tyler Jones\footnote{Amazon, \emph{tyler.j.jones@j1sfamily.com}}, Mark Kempton\footnote{Brigham Young University, Provo, UT, \emph{mkempton@mathematics.byu.edu}}, Alice Oveson\footnote{University of Maryland, College Park, MD, \emph{alice.oveson@gmail.com}}}
\date{}

\begin{document}

\maketitle
\begin{abstract}
    \noindent In this paper, we consider three variations on standard PageRank: Non-backtracking PageRank, $\mu$-PageRank, and $\infty$-PageRank, all of which alter the standard formula by adjusting the likelihood of backtracking in the algorithm's random walk. We show that in the case of regular and bipartite biregular graphs, standard PageRank and its variants are equivalent. We also compare each centrality measure and investigate their clustering capabilities. 
\end{abstract}

\noindent {\bf Keywords:} PageRank; random walk; non-backtracking walk.

\section{Introduction}
\quad Since its development in 1998, the PageRank algorithm has been a powerful tool in search engine optimization \cite{page1999pagerank} and has subsequently been adapted to solve a variety of problems including word sense disambiguation in natural language processing \cite{mihalcea2004pagerank}, identifying key social media users in social network analysis \cite{heidemann2010identifying}, spam detection in web browsing \cite{gyongyi2004combating}, anomaly detection in movements of seniors \cite{payandeh2019application}, among many others. In this paper we explore various variants of PageRank: Non-backtracking PageRank as introduced by Arrigo et.~al.~\cite{arrigo2019non}, $\mu$-PageRank as introduced by Aleja et.~al.~\cite{aleja2019non} and Criado et.~al.~\cite{criado2019alpha}, and we introduce a new variant, $\infty$-PageRank. 
We also investigate the clustering and centrality measure capabilities of PageRank and its variations.

\quad Using iterated random walks, PageRank explores a network of nodes, ranking nodes by the number and quality of connections they have.
This algorithm's ability to process large networks of nodes and identify those of greatest importance or influence makes it applicable in nearly every field with only minor tailoring required to fit the different applications. One such alteration involves swapping the usual simple random walk used in the algorithm for a non-backtracking random walk. The differences between the two methods are discussed in Sections \ref{subsection:classic-pagerank} and \ref{subsection:non-backtracking-pagerank}. 

\quad Additionally, the numerical emphasis PageRank places on connections between nodes makes it well-suited for use in clustering algorithms. This potential has been thoroughly studied  \cite{andersen2006local,graham2010finding,liu2016k}. Researchers found that when compared to other clustering algorithms ($k$-means, spectral clustering, etc.). PageRank was able to deal more effectively with outliers, non-convex clusters, and high dimensional datasets \cite{liu2016k}. 
We will develop our own clustering algorithm using PageRank variants in Section \ref{section:clustering}.

\quad A non-backtracking walk on a graph $G$ is a random walk which prohibits immediate backtracking. Non-backtracking random walks are better able to model systems that are unlikely to visit the same node multiple times in a short period. 
Kitaura et al.~\cite{kitaura2022random} suggest that among these networks are models of memory and local awareness. 
Torres et al.~\cite{torres2021nonbacktracking} have also used non-backtracking to study target immunization of networks.
It is conjectured that the mixing rate (the rate at which the random walk converges to the unique stationary distribution of the graph) of a non-backtracking random walk is faster than that of a simple random walk (a conjecture proved for a variety of cases) \cite{alon2007non,kempton2016non}. 
This faster mixing rate allows for computationally efficient sampling of vertices, something that has sparked its use in many other applications including an algorithm to replace the power-of-d choices policy in allocation problems \cite{tang2018balanced}, as well as influence maximization \cite{pan2016influence} and calculation of the clustering coefficient of online social networks \cite{iwasaki2018estimating}.

\quad This study of non-backtracking random walks motivates our work on PageRank variants.
We build off the work of Arrigo et al.~\cite{arrigo2019non}, Aleja et al.~\cite{aleja2019non}, and Criado et al.~\cite{criado2019alpha} to better understand non-backtracking PageRank and $\mu$-PageRank.
In Section \ref{section:def-pagerank} we will recall the definitions of non-backtracking, and $\mu$-PageRank.
In Section \ref{section:mu-pagerank-analysis} we will define $\infty$-PageRank and analyze properties of the PageRank variants.
In Section \ref{section:clustering} we will define an algorithm for clustering networks based on PageRank and its variants.

\section{Standard, Non-Backtracking, and $\mu$-PageRank}
\label{section:def-pagerank}

\subsection{Standard PageRank}
\label{subsection:classic-pagerank}
Let $G=(V,E)$ be a directed graph with $n$ vertices and $m$ edges.
PageRank ranks the nodes of $G$ based on the number of connections a node has and the quality of those connections.
It considers a modified random walk $\{v_1,v_2,...\}$ across $G$ where,
with probability $\epsilon$, $v_i$ is chosen uniformly at random from the neighbors of $v_{i-1}$.
With probability $1-\epsilon$, $v_i$ is chosen from the set of nodes $V$ with probability distribution $u$.
The stationary distribution of the modified random walk is the PageRank vector $\pi$ of $G$, where the $i^{th}$ entry of $\pi$ is the PageRank value of node $i$.

The \emph{adjacency matrix} of the graph $G$ is the $n\times n$ matrix whose rows and columns are indexed by the vertices of $G$, and whose $ij$ entry is 1 if the directed edge $(i,j)$ occurs in the graph, and is 0 otherwise. Let $D$ denote the diagonal matrix whose diagonal entries are the out-degrees of the corresponding vertices.

\begin{defn}[PageRank]
Let $G$ be a graph with adjacency matrix $A$ and diagonal degree matrix $D$.
Let $\epsilon\in(0,1)$ and $u$ be a initial distribution vector such that $\|u\|_1=1$.
Then the stationary distribution of
\[P=\epsilon A^TD^{-1}+(1-\epsilon)u\mathbf{1}^T\]
is the PageRank vector $\pi$ of $G$.
\label{def:pagerank}
\end{defn}

Non-backtracking PageRank is the same as PageRank but the random walk step is now a non-backtracking random walk \cite{arrigo2019non,aleja2019non}.
In order to compare standard and non-backtracking PageRank, we look at an alternative representation of standard PageRank.
We will view the random walk as a walk on the directed edges of the graph.  To this end, define a graph $\hat{G}$ whose vertex set is the set of directed edges of $G$.
We put a directed edge from $(i,j)$ to $(k,l)$ in $\hat{G}$ if $j=k$.
We encapsulate this information in a matrix $C$:
\begin{align*}
    C((i,j),(k,l))=\begin{cases}1&j=k\\0&j\neq k\end{cases}.
\end{align*}
Let $\hat{D}$ be the diagonal out-degree matrix of $\hat{G}$.
Additionally we define the matrices
\begin{align*}
    S((i,j),x)&=\begin{cases}1&j=x\\0&j\neq x\end{cases}&T(x,(i,j))&=\begin{cases}1&i=x\\0&i\neq x\end{cases}
\end{align*}
which are used to project from the vertices of $G$ to its ``lifted'' edges and vice versa.
Arrigo et.~al.~\cite{arrigo2019non} use these matrices to calculate PageRank of $G$ using the graph $\hat{G}$.

When working with an undirected graph $G$, we simply view it as a directed graph in which each edge $i\sim j$ in $G$ is viewed as two directed edges $(i,j)$ and $(j,i)$.

\begin{defn}[Edge PageRank \cite{arrigo2019non}]
Let $G$ be a graph with lifted graph $\hat{G}$.
Let $C$ be the adjacency matrix of $\hat{G}$ and $\hat{D}$ the (edge) degree matrix.
Let $u=T^T(TT^T)^{-1}v$ be the initial distribution vector from standard PageRank projected onto $\hat{G}$.
Then the stationary distribution of 
\[H_1=\epsilon C^T\hat{D}^{-1}+(1-\epsilon)u\mathbf{1}^T\] 
is the edge PageRank $\hat{\pi}$ of $G$.
\label{def:edge-pagerank}
\end{defn}

\begin{theorem}[\cite{arrigo2019non} Corollary 1]
In Definition 2.2, The PageRank of $G$ is $\pi=T\hat{\pi}$.
\label{def:edge-pagerank}
\end{theorem}

\subsection{Non-backtracking PageRank}
\label{subsection:non-backtracking-pagerank}

Using the lift of $G$ onto its directed edges, we can calculate non-backtracking PageRank.
To do this we consider the graph created by the non-backtracking matrix defined by
\[B((i,j),(k,l))=\begin{cases}1&j=k\text{ and }i\neq l\\0&\text{otherwise}\end{cases}.\]
The graph associated with $B$ has much in common with $\hat{G}$, however edges that backtrack on $G$ do not connect to each other.
Note that performing a random walk on the graph generated by $B$ is equivalent to a non-backtracking random walk on the graph $G$.
Hence, using $B$ as our adjacency matrix, we can define non-backtracking PageRank in the same manner as Arrigo et~al.~\cite{arrigo2019non} as follows:

\begin{defn}[Non-backtracking PageRank \cite{arrigo2019non}]
Let $G$ be a graph with $B$ its non-backtracking matrix and $\hat{D}$ its edge degree matrix.
Let $u=T^T(TT^T)^{-1}v$ be the initial distribution vector from standard PageRank projected onto the directed edges of $G$.
Then the stationary distribution of 
\[H_0=\epsilon B^T(\hat{D}-I)^{-1}+(1-\epsilon)u\mathbf{1}^T\]
is the edge non-backtracking PageRank $\hat{\pi}_0$ of $G$.
The non-backtracking PageRank of $G$ is $\pi_0=T\hat{\pi}_0$.
\end{defn}

\subsection{$\mu$-PageRank}
\label{subsection:mu-pagerank}

The notion of $\mu$-PageRank (or $\mu$-centrality) has been studied to give an alternative ranking to nodes \cite{aleja2019non,criado2019alpha}.
The main idea of $\mu$-PageRank is to again perform a modified random walk $\{v_1,v_2,...\}$ where $v_i$ is chosen at random with probability $\epsilon$ from the neighbors of $v_{i-1}$.
However in $\mu$-PageRank, the neighbor from $v_{i-2}$ is weighted by a factor of $\mu$ and all other edges are equally likely to be chosen.
To encapsulate this modified random walk into a Markov chain one can lift the graph $G$ to a graph of directed edges.
We weight every backtracking connection with the probability $\mu$.
To do this we define a backtracking operator $\tau$:
\[\tau((i,j),(k,l))=\begin{cases}1&j=k,i=l\\0&\text{otherwise}\end{cases}.\]

\begin{defn}[$\mu$-PageRank]
Let $G$ be a graph with $C$ its edge adjacency matrix, $\hat{D}$ its edge degree matrix and $\tau$ the backtracking operator.
Let $C_\mu=C-(1-\mu)\tau$.
Let $u=T^T(TT^T)^{-1}v$ be the initial distribution vector from standard PageRank projected onto the directed edges of $G$.
Then the stationary distribution of
 \[H_\mu=\epsilon C_\mu^T(\hat{D}-(1-\mu)I)^{-1}+(1-\epsilon)u\mathbf{1}^T\]
is the edge $\mu$-PageRank $\hat{\pi}_\mu$ of $\hat{G}$.
The $\mu$-PageRank of $G$ is $\pi_\mu=T\hat{\pi}_\mu$.
\label{defn:mu-pagerank}
\end{defn}

\begin{remark}
Aleja et al.~and Criado et al.~\cite{aleja2019non,criado2019alpha} also study $\mu$-PageRank for values $\mu\in[0,1]$.
Our study allows $\mu\in[0,\infty)$ to represent random walks where backtracking becomes increasingly likely.
This notion will be studied in section \ref{subsection:limit}.
\end{remark}

\begin{remark}
Setting $\mu=1$ weights all the edges the same giving the standard PageRank of $G$.
If $\mu=0$, this makes it impossible to choose $v_{i-2}$ as $v_i$ with probability $\epsilon$ and is therefore non-backtracking PageRank.
This motivates the notation $H_1$ and $H_0$ for standard and non-backtracking PageRank respectively.
\label{remark:mu-equals-1}
\end{remark}

\section{Analysis of $\mu$-PageRank}
\label{section:mu-pagerank-analysis}

\quad Given the versatility of $\mu$-PageRank, analyzing $\mu$-PageRank can lead to insights for both non-backtracking and standard PageRank.
In this section, we will prove general properties of $\mu$-PageRank.
We will also calculate the limiting value of $\mu$-PageRank as $\mu\rightarrow\infty$, defining the notion of $\infty$-PageRank.
We will end the section discussing conjectures about the implications of $\infty$-PageRank.

\subsection{$\mu$-PageRank of Regular Graphs and Biregular Bipartite Graphs}
\label{subsec:reg-bireg}
\quad With the same methodology as Arrigo et al.~\cite{arrigo2019non}, we can prove a more general result of their Theorem 4, which shows that when the graph is regular (all nodes with the same out-degree), then the nonbacktracking PageRank is the same as the usual PageRank.  We show this holds for the $\mu$-PageRank for any $\mu\geq0$.

\begin{theorem}
Let $A$ be the adjacency matrix of a directed $k$-regular connected graph with $k\geq 2$.
Then for $\epsilon\in[0,1]$, define the matrices $H_\mu$, $H_1$ and $T$ as before.
Then for any $\mu\in[0,\infty)$, if $H_\mu^T\hat{\mathbf{y}}=\hat{\mathbf{y}}$ and $H_1^T\hat{\mathbf{x}}=\hat{\mathbf{x}}$ with $\|\hat{\mathbf{x}}\|=\|\hat{\mathbf{y}}\|=1$, then we have $T\hat{\mathbf{y}}=T\hat{\mathbf{x}}$.  That is, $\pi_\mu=\pi$ for all $\mu$.
\end{theorem}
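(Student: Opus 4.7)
My plan is to show that for a $k$-regular graph both $H_1$ and $H_\mu$ are column stochastic, so that the hypothesis $H^T\hat{\mathbf{y}} = \hat{\mathbf{y}}$ forces $\hat{\mathbf{y}}$ to be a scalar multiple of $\mathbf{1}$ (the unique left eigenvector of a column-stochastic matrix for eigenvalue $1$). After normalization, this pins both $\hat{\mathbf{x}}$ and $\hat{\mathbf{y}}$ down to the same vector, and the conclusion is immediate upon applying $T$.

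The first substantive step is the column-sum calculation. With $k$-regularity, $\hat{D} = kI$, so $\hat{D} - (1-\mu)I = (k-1+\mu)I$. I would then observe that at column $(k,l)$, $C$ contributes $\deg(k) = k$ ones (one for each edge $(i,k)$), while $\tau$ places its single column entry at the row indexed by $(l,k)$; thus the column sum of $C_\mu = C - (1-\mu)\tau$ equals $k - 1 + \mu$, making every column of $\epsilon C_\mu(\hat{D}-(1-\mu)I)^{-1}$ sum to $\epsilon$. The teleportation column is $(1-\epsilon) u$, summing to $(1-\epsilon)\|u\|_1$; using $u = T^T(TT^T)^{-1} v$ together with $TT^T = D$ and $T\mathbf{1} = d$, I would compute $\|u\|_1 = d^T D^{-1} v = \mathbf{1}^T v = 1$. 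The two contributions add to $1$, giving column stochasticity of $H_\mu$ for every $\mu \in [0,\infty)$, and the case $\mu = 1$ handles $H_1$ as well.

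Once column stochasticity is in hand, $H_\mu^T \mathbf{1} = \mathbf{1}$. I would then invoke Perron--Frobenius, applicable since the teleportation term guarantees that the associated Markov chain is irreducible and aperiodic, to conclude that the eigenspace for eigenvalue $1$ of $H_\mu^T$ is one-dimensional, namely $\mathrm{span}(\mathbf{1})$. The unit-norm hypothesis then gives $\hat{\mathbf{y}} = \mathbf{1}/\|\mathbf{1}\|$, and the identical argument at $\mu = 1$ produces the same formula for $\hat{\mathbf{x}}$. Hence $\hat{\mathbf{x}} = \hat{\mathbf{y}}$, and applying $T$ together with $T\mathbf{1} = d = k\mathbf{1}$ yields the required equality $T\hat{\mathbf{x}} = T\hat{\mathbf{y}}$.

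The main obstacle I expect is the careful column-sum bookkeeping for $C_\mu$, in particular the contribution of $\tau$ to each column and the sign of $(1-\mu)$ when $\mu$ exceeds $1$, along with the auxiliary identity $\|u\|_1 = 1$ derived from the specific lift formula. A minor secondary issue is handling degenerate parameter values ($\epsilon \in \{0,1\}$ or a teleportation vector $v$ with vanishing entries), where strict positivity of the chain fails and the Perron--Frobenius application requires a slight modification, which can be made by directly verifying that $(H^T - I)$ has nullity $1$ using the connectedness of $G$.
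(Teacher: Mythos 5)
Your column-sum bookkeeping is correct ($C$ contributes $\deg(k)=k$ ones to column $(k,l)$, $\tau$ contributes exactly one entry there, at row $(l,k)$, and $\mathbf{1}^Tu=\mathbf{1}^TT^TD^{-1}v=d^TD^{-1}v=\mathbf{1}^Tv=1$), and your route is genuinely different from the paper's: the paper simply defers to Theorem 4 of Arrigo et al., whose argument --- like the paper's own proof of the bipartite analogue, Theorem \ref{theorem:bipartite} --- expands $(I-\epsilon M)^{-1}$ as a Neumann series and checks term by term that the uniform vector is preserved. Your stochasticity-plus-Perron--Frobenius argument is shorter and avoids the series manipulation entirely.

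One substantive caution, though. As defined in this paper, $C_\mu(\hat D-(1-\mu)I)^{-1}$ is normalized so that $H_\mu$ is \emph{column} stochastic, and the $\mu$-PageRank vector is its stationary distribution, i.e.\ the \emph{right} Perron vector satisfying $H_\mu\hat\pi_\mu=\hat\pi_\mu$. The hypothesis $H_\mu^T\hat{\mathbf y}=\hat{\mathbf y}$ in the theorem statement is inherited from Arrigo et al., where the Google matrices are row stochastic and that equation \emph{is} the PageRank equation; under this paper's conventions it instead picks out the left Perron vector. Your proof therefore establishes the literal statement (both left Perron vectors are $\mathbf 1/\|\mathbf 1\|$, so the conclusion is immediate but says nothing about PageRank values). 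If the intended objects are the PageRank vectors, the companion count is what you need: the \emph{row} sums of $C_\mu$ also equal $k-1+\mu$, so $M=\tfrac{1}{k-1+\mu}C_\mu$ is doubly stochastic; with uniform teleportation $u=T^TD^{-1}\mathbf 1/n=\tfrac{1}{nk}\mathbf 1$ one gets $Mu=u$, hence $\hat{\mathbf y}=(1-\epsilon)\sum_r\epsilon^rM^ru=u$ and $T\hat{\mathbf y}=\tfrac{1}{n}\mathbf 1$ independently of $\mu$. That is the same bookkeeping applied on the other side, so your argument adapts, but as written it solves a different eigenproblem than the one the theorem is really about. Finally, note that simplicity of the eigenvalue $1$ can genuinely fail at the allowed endpoint $\epsilon=1$ when $\mu=0$ and $k=2$ (the non-backtracking edge graph of a cycle splits into two components), so your fallback of ``nullity $1$ from connectedness'' does not always hold; in that degenerate case one must instead observe that every vector in the two-dimensional fixed space still projects under $T$ to a constant vector.
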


\begin{proof}
We follow the proof of Theorem 4 in \cite{arrigo2019non} with $H_\mu$ in place of $H_0$.  We obtain $\hat{\mathbf{y}}$ by solving
\[
\left(I-\frac{\epsilon}{k-(1-\mu)}C_\mu^T\right)\yhat = \frac{1-\epsilon}{nk}T^T\mathbf{1} = \frac{1-\epsilon}{nk}\mathbf{1}
\] hence 
\[
\yhat = \frac{1-\epsilon}{nk}\left(I-\frac{\epsilon}{k-(1-\mu)}C_\mu^T\right)^{-1}\mathbf{1}.
\]

Now, we have $C_\mu^T\mathbf{1} = (k-(1-\mu))\mathbf{1}$, so for $\epsilon<1$ we have that all the eigenvalues of $\frac{\epsilon}{k-(1-\mu)}C_\mu^T$ are less than 1, so we may express $\yhat$ as a geometric series:
\begin{align*}
\yhat&=\frac{1-\epsilon}{nk}\left(\sum_{r=0}^\infty\frac{\epsilon^r}{(k-(1-\mu))^r}(C_\mu^T)^r\right)\mathbf{1}\\
&=\frac{1-\epsilon}{nk}\sum_{r=0}^\infty\frac{\epsilon^r}{(k-(1-\mu))^r}(k-(1-\mu))^r\mathbf{1}\\
&=\frac{1-\epsilon}{nk}\frac{1}{1-\epsilon}\mathbf{1}\\
&=\frac{1}{nk}\mathbf{1}.
\end{align*}

Hence we get that the projection satisfies
\[
T\yhat = \frac{1}{nk}T\mathbf{1} = \frac{1}{n}\mathbf{1}=\pi
\]
as desired.
\end{proof}

We can further extend the result to bipartite biregular graphs, which are bipartite graphs where every node within any one of the parts has the same out-degree.

\begin{theorem}
Let $A$ be the adjacency matrix of a directed bipartite biregular connected graph with $d_1,d_2\geq 2$ where $d_i$ is the degree of each node in the $i^{th}$ part. Then for $\epsilon\in(0,1)$, define the matrices $H_\mu$, $H_1$ and $T$ as before.
If $H_1^T\hat{\mathbf{x}}=\hat{\mathbf{x}}$ and $H_\mu^T\hat{\mathbf{y}}=\hat{\mathbf{y}}$, and $\|\hat{\mathbf{x}}\|=\|\hat{\mathbf{y}}\|=1$, then $T\hat{\mathbf{y}}=T\hat{\mathbf{x}}$.  That is, $\pi_\mu=\pi$ for all $\mu>0$.
\label{theorem:bipartite}
\end{theorem}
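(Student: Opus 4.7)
The plan is to extend the strategy used in the proof of Arrigo et al.'s Theorem 4 (and our preceding theorem for the $k$-regular case) to the bipartite biregular setting. In the regular case, the edge-degree matrix $\hat{D}$ reduces to the scalar multiple $kI$, so after projecting the edge-level fixed-point equation to vertices via $T$, the $\mu$-dependent normalization $(\hat{D}-(1-\mu)I)^{-1}$ becomes a single constant that is absorbed into a rescaling. In the biregular case, $\hat{D}$ takes only two values $d_1$ and $d_2$, which is still a drastic enough reduction that a parallel argument should go through, with two scalars to track rather than one.

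First I would unpack the edge-level fixed-point equation $H_\mu^T\hat{\mathbf{y}}=\hat{\mathbf{y}}$ coordinatewise and establish the key identity
\[ \sum_{j\sim x}\bigl(C_\mu^T\hat{\mathbf{y}}\bigr)_{(x,j)}=(d_x-1+\mu)\,q_x,\qquad q_x:=\sum_{k\sim x}\hat{y}_{(k,x)}, \]
which is the analogue of the observation that drives the regular-case proof and requires no regularity assumption (it follows immediately from the definition of $C_\mu=C-(1-\mu)\tau$ after interchanging the two summations).

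Next I would apply $T$ to both sides of $H_\mu^T\hat{\mathbf{y}}=\hat{\mathbf{y}}$ and split by which part of the bipartition the vertex $x$ lies in. Because every neighbor of a vertex in $V_i$ lies in $V_{3-i}$, the inverse factor $1/(\hat{D}_{(x,j)}-(1-\mu))=1/(d_{3-i}-1+\mu)$ is constant as $j$ ranges over the neighbors of $x$, so it can be pulled out of the sum. This produces two equations, one on each side of the bipartition, relating $p_x:=(T\hat{\mathbf{y}})_x$, $q_x$, and the scalar $u^T\hat{\mathbf{y}}$, with all $\mu$-dependence concentrated in the ratios $(d_i-1+\mu)/(d_{3-i}-1+\mu)$. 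Running the same computation for $\hat{\mathbf{x}}$ (the $\mu=1$ case) yields the analogous two-part system with $\mu=1$.

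Then I would derive a second system by writing the fixed-point equation for $\hat{y}_{(k,x)}$ and summing over $k\sim x$, producing a relation between $q_x$, $p_x$, and $\sum_{k\sim x}q_k$. Combining this with the first system lets one eliminate the auxiliary variables $q_x$ to obtain a closed linear equation at the vertex level. The goal is to show that after this elimination, both $T\hat{\mathbf{x}}$ and $T\hat{\mathbf{y}}$ satisfy the same linear system on $V$; uniqueness of the stationary distribution (guaranteed by $\epsilon<1$ and connectedness of $G$) then forces $T\hat{\mathbf{y}}=T\hat{\mathbf{x}}$.

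The hard part will be the algebraic cancellation in the elimination step. Unlike the $k$-regular case, where the ratio $(d_x-1+\mu)/(d_{3-i}-1+\mu)$ collapses to $1$, in the biregular case this ratio and its reciprocal appear on the two sides of the bipartition, and they must be shown to combine consistently so that no residual $\mu$-dependence survives at the vertex level. This is precisely where the biregularity hypothesis is used in an essential way, since it forces each side of the bipartition to contribute matching factors to the reduced equation; the argument would break down for any graph in which the two parts do not each carry a single constant degree value.
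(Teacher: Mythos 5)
Your vertex-level reduction starts correctly: summing the edge fixed-point equation over the out-edges of $x$ gives $p_x=\epsilon q_x+(1-\epsilon)v_x$ with no regularity needed, and summing over the in-edges and pulling the constant factor $1/(d_{3-i}-1+\mu)$ out of the sum gives a second relation. But the cancellation you defer to the final step does not occur. Carrying the elimination through, the closed vertex-level system is
\[
\bigl(d-1+\epsilon^{2}+\mu(1-\epsilon^{2})\bigr)p_x-\epsilon\sum_{j\sim x}p_j=(1-\epsilon)\Bigl[(d-1+\mu)\Bigl(v_x+\epsilon\sum_{j\sim x}\tfrac{v_j}{d_j}\Bigr)-\epsilon\sum_{j\sim x}v_j\Bigr],
\]
where $d$ is the common degree of the neighbors of $x$. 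This system retains genuine $\mu$-dependence; differentiating in $\mu$ shows its solution can be $\mu$-independent only if $p=\tfrac{1}{1+\epsilon}(I+\epsilon AD^{-1})v$, i.e.\ only if standard PageRank already coincides with $\infty$-PageRank. That happens when $v$ is uniform (then $p$ is constant on each part of the bipartition), but not for a general teleportation vector --- and the conclusion of the theorem itself fails then: on $C_6$ with $v=e_1$ the non-backtracking edge walk circulates deterministically and after three steps sits entirely on the antipodal vertex, while the simple walk reaches it with probability $1/4$, so the two PageRank vectors differ. Hence ``$T\hat{\mathbf{x}}$ and $T\hat{\mathbf{y}}$ satisfy the same linear system'' cannot be established; the missing ingredient is the uniformity of $v$, which the paper uses silently but essentially (its right-hand side is $\tfrac{1-\epsilon}{n}\mathbf{1}$ throughout).

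For comparison, the paper expands $\bigl(I-\epsilon C_\mu^T(\hat D-(1-\mu)I)^{-1}\bigr)^{-1}$ as a Neumann series, uses the row-sum identities $C_i^T\mathbf{1}=(d_i-(1-\mu))\mathbf{1}$ to collapse every term applied to the uniform vector, and computes both $T\hat{\mathbf{y}}$ and $\mathbf{x}$ in closed form as $\tfrac{1}{n(1+\epsilon)}\bigl(1+\epsilon d_i/d_{3-i}\bigr)$ on part $i$. If you want to keep your local, elimination-style argument, the repair is to posit the ansatz that $\hat y_{(j,l)}$ is a constant $c_i$ depending only on the part containing $j$: then $q_j-(1-\mu)\hat y_{(l,j)}=(d_j-1+\mu)c_{3-i}$, the normalization cancels exactly, the two constants solve a $\mu$-independent $2\times 2$ system, and uniqueness of the stationary distribution finishes the proof. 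But that ansatz is available only because the teleportation is uniform, so the uniformity assumption must enter explicitly.
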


\begin{proof}
Since $G$ is bipartite, we can write
\begin{align*}
    C_\mu=\begin{pmatrix}0&C_2\\C_1&0\end{pmatrix}&&T=\begin{pmatrix}T_1&0\\0&T_2\end{pmatrix}
\end{align*}
where the $\mu$ denotes the backtracking parameter and the subscripts 1 and 2 denote the first and second parts of the bipartite graph.
We organize the columns of the matrices such that the first $r$ columns are associated with the $r$ nodes in the first part and the $m-r$ columns are associated with the $m-r$ nodes in the second part.
We can solve for $\hat{\mathbf{y}}$ by solving
\[\Biggl(\begin{pmatrix}I_1&0\\0&I_2\end{pmatrix}-\epsilon\begin{pmatrix}0&C_1^T\\C_2^T&0\end{pmatrix}\begin{pmatrix}\frac{1}{d_2-(1-\mu)}I&0\\0&\frac{1}{d_1-(1-\mu)}I\end{pmatrix}\Biggr)\hat{\mathbf{y}}=\frac{1-\epsilon}{n}\begin{pmatrix}T_1^T&0\\0&T_2^T\end{pmatrix}\begin{pmatrix}\frac{1}{d_1}&0\\0&\frac{1}{d_2}\end{pmatrix}\mathbf{1}.\]

Recall that $T$ is the out degree matrix. Thus $T^T\mathbf{1}$ counts the number of nodes which point to a certain edge (the edge in-degree). Note that edges can only have one node pointing to them. So $T^T\mathbf{1}=\mathbf{1}$. We can then simplify the right side of the equation to
\begin{align*}
    \frac{1-\epsilon}{n}\begin{pmatrix}T_1^T&0\\0&T_2^T\end{pmatrix}\begin{pmatrix}\frac{1}{d_1}&0\\0&\frac{1}{d_2}\end{pmatrix}\mathbf{1}=\frac{1-\epsilon}{n}\begin{pmatrix}\frac{1}{d_1}\mathbf{1}\\\frac{1}{d_2}\mathbf{1}\end{pmatrix}.
\end{align*}

We use the inverse of the matrix on the left side of the equation to get
$$\hat{\mathbf{y}}=\frac{1-\epsilon}{n}\Biggl(\begin{pmatrix}I_1&0\\0&I_2\end{pmatrix}-\epsilon\begin{pmatrix}0&\frac{1}{d_1-(1-\mu)}C_1^T\\\frac{1}{d_2-(1-\mu)}C_2^T&0\end{pmatrix}\Biggr)^{-1}\begin{pmatrix}\frac{1}{d_1}\mathbf{1}\\\frac{1}{d_2}\mathbf{1}\end{pmatrix}.$$
We now want to replace the inverse matrix with a geometric series.
To do so, we verify that the eigenvalues of $\epsilon\begin{pmatrix}0&\frac{1}{d_1-(1-\mu)}C_1^T\\\frac{1}{d_2-(1-\mu)}C_2^T&0\end{pmatrix}$ have absolute value less than 1.
We note that $C_1^T\mathbf{1}$ will count the number of incoming edges to partition 1 from partition 2. This is the same as counting the number of edges leaving a node from partition 1 to partition 2 (the degree of nodes in partition 1). Thus, $C_1^T\mathbf{1}=(d_1-(1-\mu))\mathbf{1}$. Similarly $C_2^T\mathbf{1}=(d_2-(1-\mu))\mathbf{1}$.
Hence we find that
\begin{align*}
    \epsilon\begin{pmatrix}0&\frac{1}{d_1-(1-\mu)}C_1^T\\\frac{1}{d_2-(1-\mu)}C_2^T&0\end{pmatrix}\mathbf{1}&=\epsilon\begin{pmatrix}\frac{d_1-(1-\mu)}{d_1-(1-\mu)}\\\frac{d_2-(1-\mu)}{d_2-(1-\mu)}\end{pmatrix}=\epsilon\mathbf{1}.
\end{align*}
Thus if $\epsilon < 1$, we have that all eigenvalues are less than 1 and we can rewrite the matrix with a geometric series.
We replace the inverse matrix with the geometric series.
\begin{align*}
    \hat{\mathbf{y}}&=\frac{1-\epsilon}{n}\left(\sum_{r=0}^\infty\epsilon^r\begin{pmatrix}0&\frac{1}{d_1-(1-\mu)}C_1^T\\\frac{1}{d_2-(1-\mu)}C_2^T&0\end{pmatrix}^r\right)\begin{pmatrix}\frac{1}{d_1}\mathbf{1}\\\frac{1}{d_2}\mathbf{1}\end{pmatrix}.
\end{align*}
By induction, we can rewrite this sum based on the even and odd values of $r$.
When we do this, we have the matrices
\begin{align*}
    J_{1_r}&=\begin{pmatrix}\Bigl(\frac{1}{(d_1-(1-\mu))(d_2-(1-\mu))}C_1^TC_2^T\Bigr)^r&0\\0&\Bigl(\frac{1}{(d_1-(1-\mu))(d_2-(1-\mu))}C_2^TC_1^T\Bigr)^r\end{pmatrix}\\
    J_{2_r}&=\begin{pmatrix}0&\frac{1}{d_1-(1-\mu)}C_1^T\Bigl(\frac{1}{(d_1-(1-\mu))(d_2-(1-\mu))}C_2^TC_1^T\Bigr)^r\\\frac{1}{d_2-(1-\mu)}C_2^T\Bigr(\frac{1}{(d_1-(1-\mu))(d_2-(1-\mu))}C_1^TC_2^T\Bigr)^r&0\end{pmatrix}
\end{align*}
corresponding with the even and odd terms respectively.
This gives the PageRank equation
\begin{align*}
    \hat{\mathbf{y}}=&\frac{1-\epsilon}{n}\Biggl(\sum_{r=0}^\infty\epsilon^{2r}J_{1_r}\begin{pmatrix}\frac{1}{d_1}\mathbf{1}\\\frac{1}{d_2}\mathbf{1}\end{pmatrix}+\sum_{r=0}^\infty\epsilon^{2r+1}J_{2_r}\begin{pmatrix}\frac{1}{d_1}\mathbf{1}\\\frac{1}{d_2}\mathbf{1}\end{pmatrix}\Biggr)\\
    =&\frac{1-\epsilon}{n}\Biggl(\sum_{r=0}^\infty\epsilon^{2r}\begin{pmatrix}\frac{1}{d_1}\Bigl(\frac{(d_1-(1-\mu))(d_2-(1-\mu))}{(d_1-(1-\mu))(d_2-(1-\mu))}\mathbf{1}\Bigr)^r\\\frac{1}{d_2}\Bigl(\frac{(d_1-(1-\mu))(d_2-(1-\mu))}{(d_1-(1-\mu))(d_2-(1-\mu))}\mathbf{1}\Bigr)^r\end{pmatrix}\\&+\sum_{r=0}^\infty\epsilon^{2r+1}\begin{pmatrix}\frac{1}{d_2(d_1-(1-\mu))}C_1^T\Bigl(\frac{(d_1-(1-\mu))(d_2-(1-\mu))}{(d_1-(1-\mu))(d_2-(1-\mu))}\mathbf{1}\Bigr)^r\\\frac{1}{d_1(d_2-(1-\mu))}C_2^T\Bigr(\frac{(d_1-(1-\mu))(d_2-(1-\mu))}{(d_1-(1-\mu))(d_2-(1-\mu))}\mathbf{1}\Bigr)^r\end{pmatrix}\Biggr)\\
    =&\frac{1-\epsilon}{n}\Biggl(\sum_{r=0}^\infty\epsilon^{2r}\begin{pmatrix}\frac{1}{d_1}\mathbf{1}\\\frac{1}{d_2}\mathbf{1}\end{pmatrix}+\sum_{r=0}^\infty\epsilon^{2r+1}\begin{pmatrix}\frac{1}{d_2}\mathbf{1}\\\frac{1}{d_1}\mathbf{1}\end{pmatrix}\Biggr)\\
    =&\frac{1-\epsilon}{n}\Biggl(\begin{pmatrix}\Bigl(\frac{1}{d_1}+\frac{\epsilon}{d_2}\Bigr)\Bigl(\sum_{r=0}^\infty(\epsilon^2)^r\Bigr)\mathbf{1}\\\Bigl(\frac{1}{d_2}+\frac{\epsilon}{d_1}\Bigr)\Bigl(\sum_{r=0}^\infty(\epsilon^2)^r\Bigr)\mathbf{1}\end{pmatrix}\Biggr).
\end{align*}
Since $\epsilon^2<1$, the summation converges to $\frac{1}{1-\epsilon^2}$. Further, we simplify the fractions and get
\begin{align*}
    \hat{\mathbf{y}}&=\frac{1}{nd_1d_2(1+\epsilon)}\begin{pmatrix}(d_2+\epsilon d_1)\mathbf{1}\\(d_1+\epsilon d_2)\mathbf{1}\end{pmatrix}.
\end{align*}
Now to project $\hat{\mathbf{y}}$ to the vertex space, we get
\begin{align*}
    T\hat{\mathbf{y}}&=\frac{1}{nd_1d_2(1+\epsilon)}\begin{pmatrix}(d_2+\epsilon d_1)(T_1\mathbf{1})\\(d_1+\epsilon d_2)(T_2\mathbf{1})\end{pmatrix}.
\end{align*}

Note that $T\mathbf{1}$ counts the number of outgoing edges from a given node. Thus, $T_1\mathbf{1}=d_1$ and $T_2\mathbf{1}=d_2$.
Thus,
\begin{align*}
    T\hat{\mathbf{y}}&=\frac{1}{nd_1d_2(1+\epsilon)}\begin{pmatrix}(d_2+\epsilon d_1)d_1\mathbf{1}\\(d_1+\epsilon d_2)d_2\mathbf{1}\end{pmatrix}\\
    &=\frac{1}{n(1+\epsilon)}\begin{pmatrix}1+\epsilon\frac{d_1}{d_2}&0\\0&1+\epsilon\frac{d_2}{d_1}\end{pmatrix}\mathbf{1}.
\end{align*}

Recall that in a bipartite, biregular graph, the adjacency matrix is $A=\begin{pmatrix}0&A_2\\A_1&0\end{pmatrix}.$
Thus 
\begin{align*}
    \mathbf{x}&=\frac{1-\epsilon}{n}(I-\alpha A^TD^{-1})^{-1}\mathbf{1}\\
    &=\frac{1-\epsilon}{n}\Bigl(\sum_{r=0}^\infty\epsilon^r(A^T)^r(D^{-1})^r\Bigr)\mathbf{1}\\
    &=\frac{1-\epsilon}{n}\Bigl(\sum_{r=0}^\infty\epsilon^r\begin{pmatrix}0&A_1^T\\A_2^T&0\end{pmatrix}^r\begin{pmatrix}\frac{1}{d_2}&0\\0&\frac{1}{d_1}\end{pmatrix}^r\Bigr)\mathbf{1}\\
    &=\frac{1-\epsilon}{n}\Bigl(\sum_{r=0}^\infty\epsilon^r\begin{pmatrix}0&\frac{A_1^T}{d_2}\\\frac{A_2^T}{d_1}&0\end{pmatrix}^r\Bigr)\mathbf{1}\\
    &=\frac{1-\epsilon}{n}\Bigl(\sum_{r=0}^\infty\epsilon^{2r}\begin{pmatrix}(\frac{A_2^TA_1^T}{d_1d_2})^r&0\\0&(\frac{A_1^TA_2^T}{d_1d_2})^r\end{pmatrix}+\sum_{r=0}^\infty\epsilon^{2r+1}\begin{pmatrix}0&\frac{A_1^T}{d_2}(\frac{A_1^TA_2^T}{d_1d_2})^r\\\frac{A_2^T}{d_1}(\frac{A_2^TA_1^T}{d_1d_2})^r&0\end{pmatrix}\Bigr)\mathbf{1}\\
    &=\frac{1-\epsilon}{n}\Bigl(\sum_{r=0}^\infty\epsilon^{2r}\begin{pmatrix}(\frac{A_2^TA_1^T}{d_1d_2})^r\mathbf{1}\\(\frac{A_1^TA_2^T}{d_2d_1})^r\mathbf{1}\end{pmatrix}+\sum_{r=0}^\infty\epsilon^{2r+1}\begin{pmatrix}\frac{A_1^T}{d_2}(\frac{A_1^TA_2^T}{d_2d_1})^r\mathbf{1}\\\frac{A_2^T}{d_1}(\frac{A_2^TA_1^T}{d_1d_2})^r\mathbf{1}\end{pmatrix}\Bigr)\\
    &=\frac{1-\epsilon}{n}\Bigl(\sum_{r=0}^\infty\epsilon^{2r}\mathbf{1}+\sum_{r=0}^\infty\epsilon^{2r+1}\begin{pmatrix}\frac{d_1}{d_2}\\\frac{d_2}{d_1}\end{pmatrix}\mathbf{1}\Bigr)\\
    &=\frac{1-\epsilon}{n}\Bigl(\sum_{r=0}^\infty\epsilon^{2r}\mathbf{1}+\epsilon\sum_{r=0}^\infty\epsilon^{2r}\begin{pmatrix}\frac{d_1}{d_2}\\\frac{d_2}{d_1}\end{pmatrix}\mathbf{1}\Bigr)\\
    &=\frac{1-\epsilon}{n(1-\epsilon^2)}\Bigl(\begin{pmatrix}1+\epsilon\frac{d_1}{d_2}\\1+\epsilon\frac{d_2}{d_1}\end{pmatrix}\mathbf{1}\Bigr)\\
    &=\frac{1}{n(1+\epsilon)}\begin{pmatrix}1+\epsilon\frac{d_1}{d_2}&0\\0&1+\epsilon\frac{d_2}{d_1}\end{pmatrix}\mathbf{1}.
\end{align*}
Thus, $T\hat{\mathbf{y}}=\mathbf{x}=T\hat{\mathbf{x}}$.
\end{proof}

\begin{remark}[\cite{arrigo2019non}, corollaries 1,2]
Applying Theorem \ref{theorem:bipartite} with $\mu=1$ and $\mu=0$, we have the bipartite biregular graphs result in equivalent PageRank values for standard and non-backtracking PageRank.
\end{remark}

\subsection{Limit as $\mu\rightarrow\infty$}
\label{subsection:limit}

We remark that in everything above, we are working with directed graphs as this provides greater generality and the definitions and results apply to this level of generality without any added difficulty.  However, the non-backtracking condition is less meaningful in directed graphs when the reversal of some edges may not be present. As such, we will focus the remainder of the paper on undirected graphs, which can be viewed as directed graphs by considering each undirected edge as a pair of directed edges that are the reversals of each other.  The proof of the next result will rely on the graph being undirected.

\quad Most commonly $\mu$-PageRank is studied for $\mu\in[0,1]$ (see \cite{aleja2019non,criado2019alpha}).
However we can easily extend the domain of $\mu$ to be $[0,\infty)$.
Intuitively, as $\mu$ gets larger it becomes more and more likely to backtrack with probability $\epsilon$ in the modified random walk.
Thus as $\mu\rightarrow\infty$, we approach the $\infty$-PageRank of the graph $G$ (or the ranking of nodes if with probability $\epsilon$ we bounce back and forth between two nodes).
This limit can be calculated explicitly.

\begin{figure}
    \centering
    \begin{subfigure}{.45\linewidth}
    \centering\includegraphics[width=\linewidth]{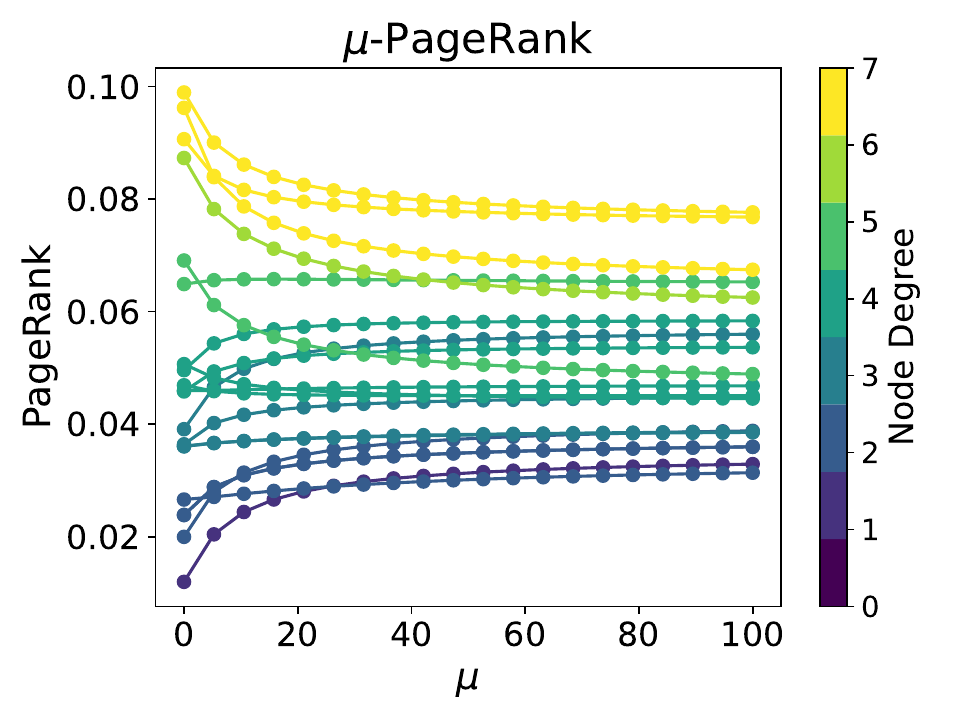}
    \subcaption{The PageRank values of an ER graph $(N=20,\langle k\rangle=4)$ over 20 different values of $\mu$ spanning from 0 to 100. Curves are colored by degree of node. }
    \end{subfigure}\hfill
    \begin{subfigure}{.45\linewidth}
    \centering\includegraphics[width=\linewidth]{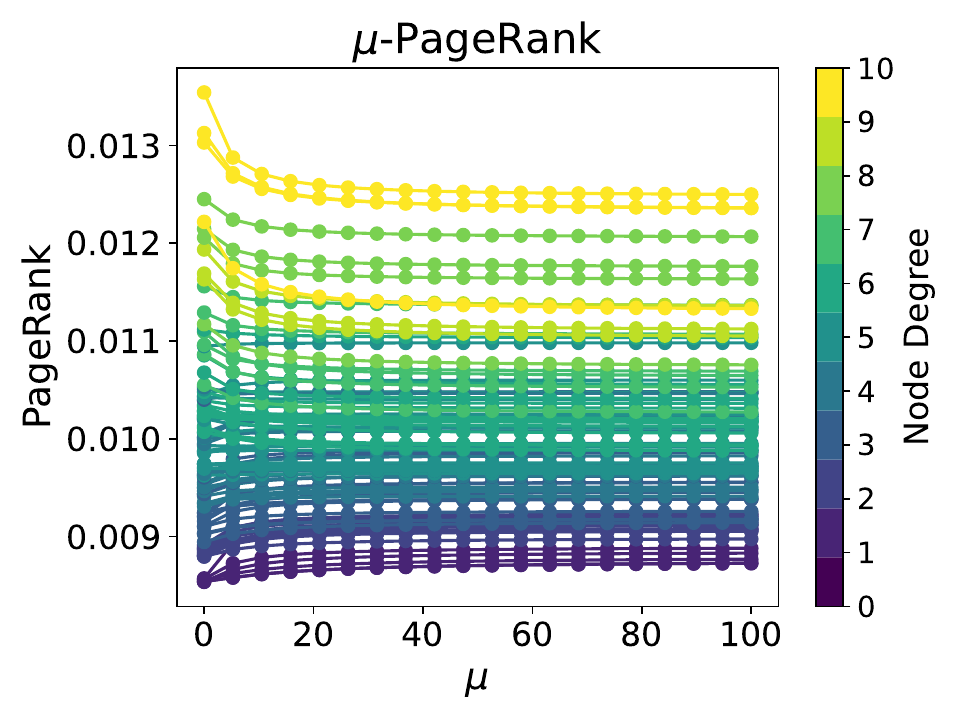}
    \subcaption{The PageRank values of an ER graph $(N=100,\langle k\rangle=4)$ over 20 different values of $\mu$ spanning from 0 to 100. Curves are colored by degree of node.}
    \end{subfigure}
    \caption{$\mu$-PageRank Values over $\mu\in[0,100]$}
    \label{fig:centrality}%
\end{figure}

\begin{proposition}[$\infty$-PageRank]
Let $G$ be an undirected graph, $v$ an initial distribution vector on the vertices of $G$, and $\mu$-PageRank as in Definition \ref{defn:mu-pagerank}.
Let $\hat{\pi}_\infty=\lim_{\mu\rightarrow\infty}\hat{\pi}_\mu$ and let $\pi_\infty = T\hat{\pi}_\infty$ be the $\infty$-PageRank of $G$.
Then 
\begin{align*}
    \pi_\infty&=(1+\epsilon)^{-1}v+\epsilon(1+\epsilon)^{-1}AD^{-1}v.
\end{align*}
\label{prop:limit}
\end{proposition}

\begin{proof}
For $\epsilon\in(0,1)$ we have
\begin{align*}
    H_\infty&=\lim_{\mu\rightarrow\infty}\epsilon C_\mu(\hat{D}-(1-\mu)I)^{-1}+(1-\epsilon)u\mathbf{1}^T\\
    &=\lim_{\mu\rightarrow\infty}\epsilon(C-(1-\mu)\tau)(\hat{D}-(1-\mu)I)^{-1}+(1-\epsilon)u\mathbf{1}^T\\
    &=\epsilon\tau+(1-\epsilon)u\mathbf{1}^T.
\end{align*}
Then we have
\begin{align}\label{eq:piinf}
    \hat{\pi}_\infty&=\epsilon\tau\hat{\pi}_\infty+(1-\epsilon)u\mathbf{1}^T\hat{\pi}_\infty.
    \end{align}
    
Observe that each $\pi_\mu$ is a distribution vector, so its entries sum to $1$ for each $\mu$, thus this will remain true in the limit, so $\mathbf{1}^T\hat{\pi}_\infty=1$.  Recall also that $u=T^T(TT^T)^{-1}v$ (see Definition \ref{defn:mu-pagerank}). Replacing this in (\ref{eq:piinf}) and simplifying, we get     
    
    \begin{align*}
    (I-\epsilon\tau)\hat{\pi}_\infty&=(1-\epsilon)T^T(TT^T)^{-1}v
\end{align*}

Since $G$ is an undirected graph, we have that $\tau^2=I$ (see \cite{kempton2016non}), so direct computation shows that $I-\epsilon\tau$ is invertible with $(I-\epsilon\tau)^{-1}=(1-\epsilon^2)^{-1}I+\frac{\epsilon}{1-\epsilon^2}\tau$. So

\begin{align*}    
  \hat{\pi}_\infty&=(1-\epsilon)(I-\epsilon\tau)^{-1}T^T(TT^T)^{-1}v\\
&=\Big((1+\epsilon)^{-1}I+\epsilon(1+\epsilon)^{-1}\tau\Big)T^T(TT^T)^{-1}v.
\end{align*}
To project onto the nodes of $G$, we left multiply by $T$.
To simplify this computation, we use the facts that $TT^T=D$, $T\tau=S^T$ (since $G$ is undirected), and $TS=A$ (see \cite{kempton2016non}).
This gives
\begin{align*}
    \pi_\infty&=(1+\epsilon)^{-1}TT^T(TT^T)^{-1}v+\epsilon(1+\epsilon)^{-1}T\tau T^T(TT^T)^{-1}v\\
    &=(1+\epsilon)^{-1}v+\epsilon(1+\epsilon)^{-1}AD^{-1}v.
\end{align*}
\end{proof}

Unsurprisingly, the $\infty$-PageRank of a node only depends upon the degree of its neighbors.
It is worth noting that this limit can be calculated extremely quickly.

\begin{remark}
The proof of Proposition \ref{prop:limit} relied on $G$ being an undirected graph: when we used $\tau^2=I$ to compute the inverse of $I-\epsilon\tau$ and the fact that $T\tau=S^T$.  These facts fail when $G$ is a directed graph in which some directed edges do not have a reversal as an edge. Indeed, diagonal entries of $\tau^2$ will be 0 for edges whose reversal is missing in the graph.  The inverse of $I-\epsilon\tau$ could still be computed, but its exact form depends on which edges have reversals and which do not.  Likewise, this will effect the exact form of $T\tau T^T$.  For any given directed case, the limit in the proof could still be worked out and and expression for $\pi_\infty$ obtained, but will depend on exactly which edges have reversals and which do not.
\end{remark}

\subsection{Monotonicity}
\label{subsection:monotonicity}

\quad As $\mu\rightarrow\infty$, we the modified random walk will simply travel back and forth between two nodes with probability $\epsilon$ and randomly jump to another set of nodes with probability $1-\epsilon$.
Consequently, for large $\mu$ we do not expect the value of $\mu$-PageRank for a single node.
Furthermore, for nodes of large degree we would expect there $\mu$-PageRank value to decrease as $\mu\rightarrow\infty$ as the chances of the random walk arriving at said node will decrease due to the modified random walk getting stuck at lower degree nodes.
We also expect the $\mu$-PageRank value of low degree nodes to increase, as a low degree node and it's neighbors have the same initial probability of being chosen in a random jump as a high degree node and the resultant modified random walk will have many more steps to this low degree node.

To test this intuition, we generated a two random graphs with average degree 4, one of size 20 and one of size 100, and calculated the $\mu$-PageRank centrality values of the graph for 20 different values of $\mu$ between 0 and 100. This simulates taking the limit as $\mu$ goes to infinity as we have seen most graphs of this nature have $\mu$-PageRank empirically converge to their limit before 100. This produces the plots seen in Figure~\ref{fig:centrality}. 
As expected, we see that the $\mu$-PageRank converges as $\mu\rightarrow\infty$ and that high degree nodes decrease in their value and low degree nodes increase in their value.
Additionally, we empirically see that this convergence is monotonic.
Based on these observations, we make the following conjecture:
\begin{conj}
The function $(\pi_\mu)_i$ is monotonic for all $i$ and \[\max_i\pi_\mu(i)-\min_i\pi_\mu(i)\leq\max_i\pi_{\mu+\alpha}(i)-\min_i\pi_{\mu+\alpha}(i)\] for all $\alpha>0$.
\end{conj}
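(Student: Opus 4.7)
The plan is to analyze $\pi_\mu$ through its resolvent representation together with implicit differentiation in $\mu$. From $H_\mu\hat{\pi}_\mu = \hat{\pi}_\mu$ and the normalization $\mathbf{1}^T\hat{\pi}_\mu = 1$ one obtains $\hat{\pi}_\mu = (1-\epsilon)(I-M_\mu)^{-1}u$, where $M_\mu = \epsilon C_\mu(\hat{D}-(1-\mu)I)^{-1}$ is the non-teleporting part of the lifted kernel. Writing $D_\mu = \hat{D}-(1-\mu)I$, a short product-rule calculation gives
\begin{equation*}
\frac{dM_\mu}{d\mu} = (\epsilon\tau - M_\mu)D_\mu^{-1},
\end{equation*}
and differentiating $(I-M_\mu)\hat{\pi}_\mu = (1-\epsilon)u$ then yields
\begin{equation*}
\frac{d\pi_\mu}{d\mu} = T(I-M_\mu)^{-1}(\epsilon\tau - M_\mu)D_\mu^{-1}\hat{\pi}_\mu.
\end{equation*}
Entrywise the middle factor $\epsilon\tau - M_\mu$ is strictly positive exactly on the backtracking successors $(i,j)\mapsto(j,i)$ and strictly negative on every other non-backtracking successor of a directed edge, encoding the intuitive fact that increasing $\mu$ shifts probability mass from forward to backward transitions.

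The main obstacle is to show that each coordinate of $d\pi_\mu/d\mu$ has a sign that does not depend on $\mu$. The outer factors $(I-M_\mu)^{-1}$ and $D_\mu^{-1}\hat{\pi}_\mu$ are entrywise nonnegative, but $\epsilon\tau - M_\mu$ is sign-indefinite, so a direct nonnegativity argument is unavailable. I would attempt to decompose $\epsilon\tau - M_\mu = M'_\mu - M''_\mu$ into two substochastic kernels whose difference realizes the ``surplus'' visits of a coupled walk at parameter $\mu+\alpha$ versus the baseline $\mu$-walk, so that $d\pi_\mu/d\mu$ becomes the difference of two expected-visit vectors amenable to path-counting. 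The results of Section~\ref{subsec:reg-bireg} provide a built-in sanity check: on regular and bipartite biregular graphs the earlier theorems force $d\pi_\mu/d\mu \equiv 0$, and any candidate argument must specialize correctly there. I expect making the coupling uniformly sign-controlled on arbitrary graphs to be the hardest step, and a pragmatic first target is to prove coordinate monotonicity under additional structure (trees, graphs with a dominant degree, or $v$ uniform).

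For the range inequality, coordinate monotonicity alone is not enough: one also needs the indices $i_{\max}$ and $i_{\min}$ realizing the extrema of $\pi_\mu$ to remain (at least weakly) fixed as $\mu$ varies. Granted such a non-crossing property, the range becomes $(\pi_\mu)_{i_{\max}} - (\pi_\mu)_{i_{\min}}$, a difference of a monotonically increasing coordinate and a monotonically decreasing one (the opposing signs being forced by $\mathbf{1}^T\pi_\mu \equiv 1$), and hence is itself monotone in $\mu$; its direction is then pinned down by comparing the endpoints $\pi_1$ and $\pi_\infty$ via Proposition~\ref{prop:limit}. Proving non-crossing in full generality looks harder than coordinate monotonicity, so a reasonable fallback is to restrict to $v$ uniform, where Proposition~\ref{prop:limit} identifies the extremal vertices explicitly in terms of the degree sequence and makes the monotone-direction argument directly verifiable.
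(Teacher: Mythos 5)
First, note that the statement you are addressing is stated in the paper only as a \emph{conjecture}: the authors support it with numerical experiments on stochastic block model graphs (Figure~\ref{fig:centrality}) and offer no proof, so there is no paper argument to compare against. Your proposal is likewise not a proof. The setup is sound --- the resolvent representation $\hat{\pi}_\mu=(1-\epsilon)(I-M_\mu)^{-1}u$, the identity $\frac{dM_\mu}{d\mu}=(\epsilon\tau-M_\mu)D_\mu^{-1}$, and the resulting formula for $\frac{d\pi_\mu}{d\mu}$ all check out, and the sign pattern you describe for $\epsilon\tau-M_\mu$ (positive $\epsilon(d-1)/(d-1+\mu)$ on backtracking successors, negative $-\epsilon/(d-1+\mu)$ elsewhere) is correct. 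But the two steps that constitute the actual content of the conjecture are left as declared intentions: (i) the decomposition of $\epsilon\tau-M_\mu$ into a difference of kernels admitting a sign-controlled coupling is not constructed, and since $(I-M_\mu)^{-1}$ mixes positive and negative contributions along every walk in the lifted graph, there is no a priori reason the coordinate signs of $\frac{d\pi_\mu}{d\mu}$ are $\mu$-independent; (ii) the non-crossing property of the argmax and argmin, which you correctly identify as necessary for the range inequality beyond mere coordinate monotonicity, is not established and is plausibly false on graphs where two vertices swap rank as $\mu$ varies. Neither gap is a technicality; each is precisely where the conjecture's difficulty lives.

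One further point worth flagging: your plan to ``pin down the direction by comparing the endpoints $\pi_1$ and $\pi_\infty$'' would, if carried out, likely expose a tension in the statement itself. The paper's prose says the range of $\mu$-PageRank values \emph{shrinks} as $\mu\to\infty$, while the displayed inequality asserts the range at $\mu$ is bounded above by the range at $\mu+\alpha$, i.e.\ that the range is nondecreasing in $\mu$. Your endpoint comparison via Proposition~\ref{prop:limit} (where $\pi_\infty$ depends only on $v$ and the neighbors' degrees) is exactly the right diagnostic for resolving which direction is intended, and would be a worthwhile first computation before investing in the coupling argument.
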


If this conjecture holds, it leads to computationally efficient comparisons between standard and non-backtracking PageRank.
Specifically, one can compare $\infty$-PageRank with standard PageRank to determine whether the non-backtracking PageRank of a node is greater or lesser than standard PageRank. 
Due to the computational complexity of calculating non-backtracking PageRank, this can create a quick comparison between the node standard and non-backtracking PageRank values and rankings.


\subsection{Top Nodes in $\infty$-PageRank and Standard PageRank}
\label{subsec:top-nodes}

\quad Due to precision error, the exact PageRank value of a node is often not as informative as the nodes ranking in comparison to other vertices in the network.
As such, we investigate how the top nodes of standard PageRank compare with the top nodes of $\infty$-PageRank.
This is motivated by the computational efficiency of computing $\infty$-PageRank.

Many real-world networks exhibit scale-free behavior \cite{barabasi2009scale,barabasi2003scale}.
That is, the degrees of the nodes are drawn from a heavy-tailed distribution.
The standard PageRank of scale-free networks has been previously studied \cite{avrachenkov2006pagerank,garavaglia2020local}.
In fact, \cite{garavaglia2020local} finds if the degree distribution is heavy-tailed, the PageRank will also be heavy-tailed.
Given the similarity in dependence on the degree distribution of $\infty$-PageRank, we expect similar results (see Figure \ref{fig:dist}a).

\begin{conj}
If $d_i\sim X$ where $X$ is a regularly-varying random variable and $d_i$ is the degree of node $i\in V(G)$, then $\pi_\infty(i)$ will also be scale-free.
\end{conj}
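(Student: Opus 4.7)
The plan is to exploit the closed-form expression from Proposition \ref{prop:limit},
\[\pi_\infty(i) = \frac{v_i}{1+\epsilon} + \frac{\epsilon}{1+\epsilon}\sum_{j\sim i}\frac{v_j}{d_j},\]
which describes $\pi_\infty(i)$ entirely in terms of $i$ and its neighbors. Taking the personalization vector $v$ to be uniform (or merely bounded away from $0$ and $\infty$), the second term reduces, up to an explicit constant, to $S_i := \sum_{j \sim i} 1/d_j$. The key intuition is that $S_i$ should scale linearly in $d_i$, with leading constant equal to $E[1/d^*]$, where $d^*$ denotes the size-biased degree of a uniformly random neighbor. Thus $\pi_\infty(i)$ will, to leading order, behave like a linear function of $d_i$ and inherit its heavy tail.

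Working in an HSCM or configuration-model framework, I would first show that, conditional on $d_i = k$, the neighbor degrees are approximately i.i.d. draws from $P(d^* = \ell) = \ell\, P(X = \ell)/E[X]$. A direct computation then gives $E[1/d^*] = 1/E[X]$, which is finite because regular variation with tail index $\alpha > 1$ forces $E[X] < \infty$; and since $1/d^* \in (0,1]$, the variance of $1/d^*$ is also bounded. A Chebyshev bound therefore yields, for $k$ large,
\[S_i = \frac{k}{E[X]} + O_p(\sqrt{k}),\]
so that, with high probability,
\[\pi_\infty(i) \;\approx\; \frac{\epsilon}{n(1+\epsilon)\,E[X]}\, d_i.\]
A standard Karamata-type transfer then shows that if $P(d_i > t) = L(t)\,t^{-\alpha}$ with $L$ slowly varying, then $P(\pi_\infty(i) > s) = \widetilde L(s)\, s^{-\alpha}$ for some slowly varying $\widetilde L$, which is the scale-free conclusion with the same tail index as the degree distribution.

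The principal obstacle is rigorously justifying the size-biased independent-neighbor approximation in the regime of extreme-tail vertices. Standard local weak convergence statements for the HSCM, such as those used in \cite{garavaglia2020local}, furnish the required independence only for a \emph{typical} vertex, whereas the tail of $\pi_\infty$ is driven by vertices of atypically large degree, where the multi-edge constraints and degree-sum conditioning of the configuration model induce genuine dependence among neighbor degrees. Overcoming this likely requires either a stronger quenched statement or a direct second-moment estimate of $S_i$ conditioned on the full degree sequence, together with a careful truncation to control the (rare but individually dominant) contribution of low-degree neighbors to $S_i$. Once this concentration step is secured, the remainder of the argument is routine manipulation of regularly varying functions.
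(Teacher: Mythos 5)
The paper does not prove this statement: it is presented purely as a conjecture, supported only by the empirical comparison of distributions in Figure~\ref{fig:dist} and an appeal to the result of \cite{garavaglia2020local} for standard PageRank. So there is no proof in the paper to compare against, and any rigorous argument along your lines would go strictly beyond what the authors establish.

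Your plan is the right one, and it is the standard route by which power-law tails of PageRank-type functionals are established in the literature: reduce to the closed form of Proposition~\ref{prop:limit}, show that $S_i=\sum_{j\sim i}1/d_j$ concentrates around $d_i\,E[1/d^*]=d_i/E[X]$, and transfer regular variation through an asymptotically linear map. Two remarks on the upper tail: since $1/d_j\le 1$ you get the deterministic bound $S_i\le d_i$, so $\pi_\infty(i)\le\frac{1}{n(1+\epsilon)}(1+\epsilon d_i)$ and the upper tail estimate is essentially free; only the matching lower bound needs the concentration argument. But the gap you flag is genuine and is the entire content of the problem. First, your Chebyshev step presumes the neighbor degrees of a degree-$k$ vertex are i.i.d.\ size-biased draws, and the local weak convergence results you would cite give this only for a uniformly random vertex, whereas the tail of $\pi_\infty$ is determined by the vertices of largest degree, exactly where conditioning on the degree sequence, simplicity constraints, and degree-degree correlations bite. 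Second, even granting per-vertex concentration, the conjecture is about the empirical distribution of $\{\pi_\infty(i)\}_{i\in V(G)}$ in a single graph realization, so you need uniformity over the $\Theta(n\,P(X>t))$ vertices in the tail (a union bound or a second-moment/empirical-measure argument), not just a statement holding with high probability for each fixed $i$. You should also state explicitly the assumption $\alpha>1$ (equivalently $E[X]<\infty$): regular variation alone does not force it, and if $E[X]=\infty$ the size-biased distribution and the constant $1/E[X]$ in your leading term are not even defined, so the argument would need to be reworked in that regime. As written, the proposal is a credible program rather than a proof.
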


We further hypothesize these distributions will mirror each other.
To test this, we measure the overlap of top nodes in each PageRank distribution.
We test this on random graphs and on scale-free networks generated by a hyper-soft configuration model based on a Pareto distribution \cite{voitalov2019scale}.
As seen in Figure \ref{fig:dist}b, we find high overlap for scale-free networks consistently for the tops nodes.
This result does not hold for random graphs.
However, as we consider a larger set of nodes, we see that the overlap decreases for scale-free networks.
This is likely due to the ``importance'' of semi-important nodes in standard PageRank being shifted to low-degree nodes connected to hubs in $\infty$-PageRank.

Because many real systems exhibit large degree heterogeneity, we examine the overlap between standard and $\infty$-PageRank and nine real systems \cite{peixoto2020netzschleuder}. We see that the overlap follows a similar pattern to that of network models with large degree heterogeneity (see Fig.~\ref{fig:real-networks}a).
Again, this is likely due to shifting importance to low-degree nodes connected to hubs in $\infty$-PageRank as seen in the foodweb shown in Fig.~\ref{fig:real-networks}b,c.

\begin{figure}
    \centering
    \begin{subfigure}{.45\linewidth}\centering
        \includegraphics[width=\linewidth]{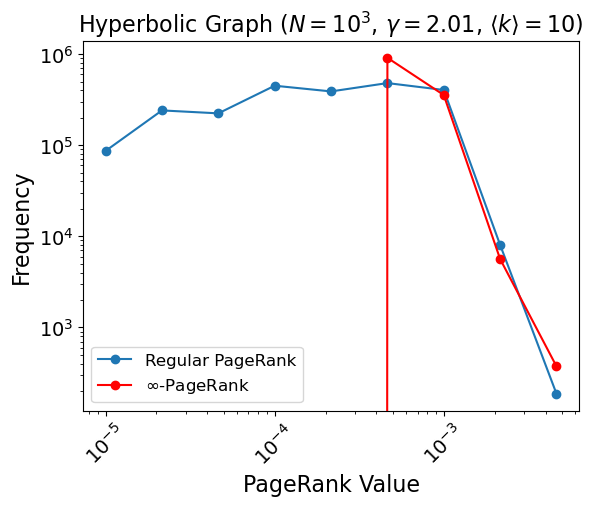}
        \caption{The PR and $\infty$-PR distributions for a hyperbolic random graph with a degree exponent of $\gamma=2.01$.}
    \end{subfigure}
    \begin{subfigure}{.45\linewidth}\centering
        \includegraphics[width=\linewidth]{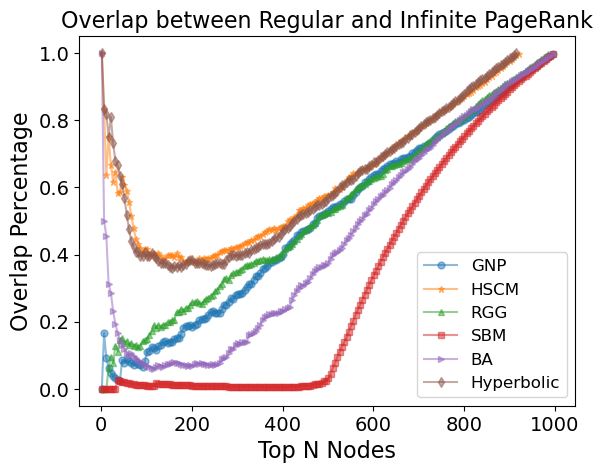}
        \caption{Percentage of overlap of top $N$ nodes between standard and $\infty$-PageRank for multiple graph models.}
    \end{subfigure}
    \caption{}
    \label{fig:dist}
\end{figure}

\begin{figure}
\sbox0{\begin{subfigure}[b]{\dimexpr 0.7\textwidth-0.3\columnsep}
  \centering\hspace*{-1.5cm}\includegraphics[height=3in]{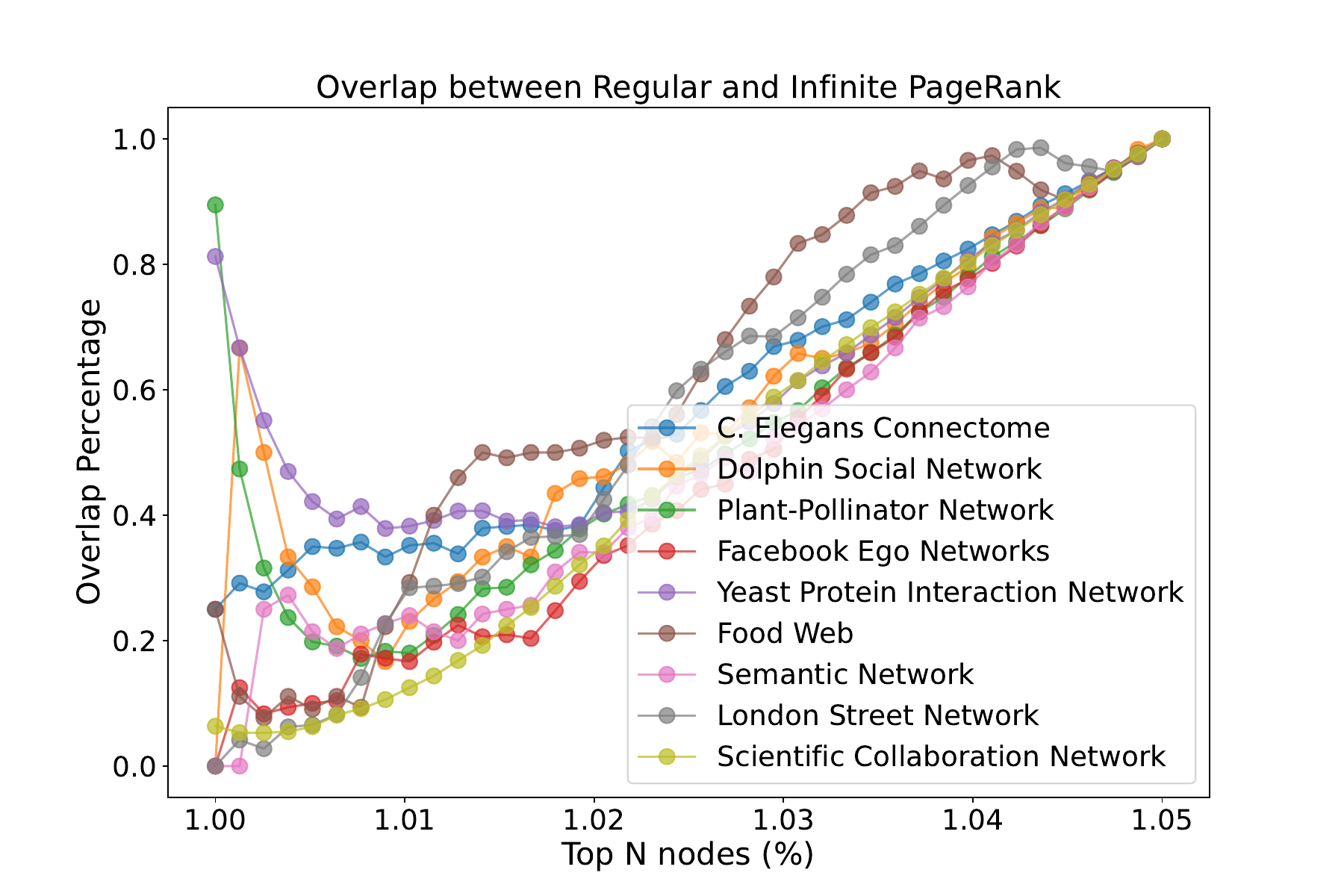}
  \caption{Percentage of overlap for top percentage of nodes in various \\real networks.}
\end{subfigure}}%
\usebox0\hfill\begin{minipage}[b][\ht0][s]{\wd0}
  \hspace*{-1cm}\begin{subfigure}{.5\linewidth}\centering
    \includegraphics[height=1.3in]{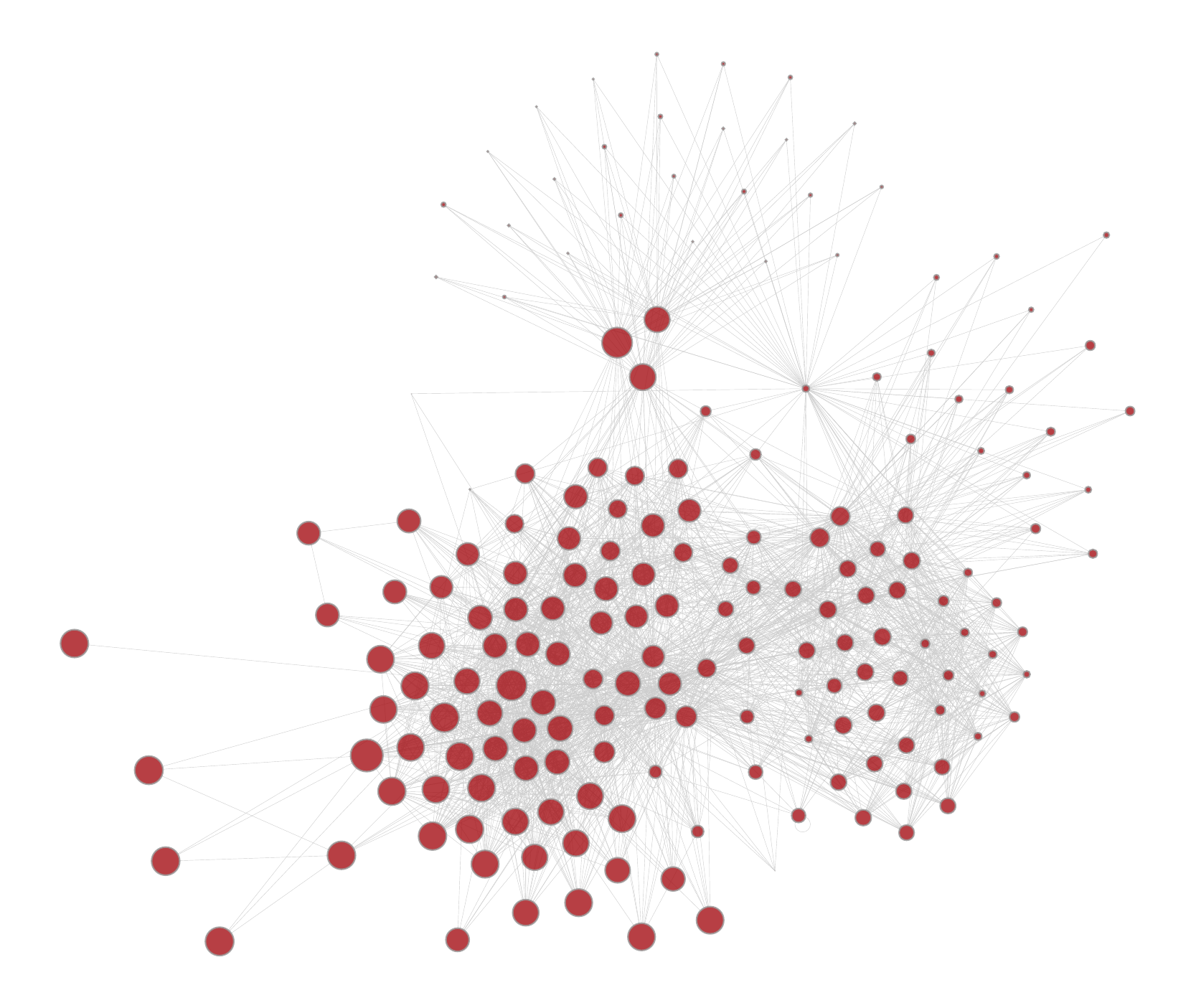}
    \caption{Foodweb where vertices are sized based on PageRank value.}
  \end{subfigure}\par
  \vfill
  \hspace*{-1cm}\begin{subfigure}[b]{.5\linewidth}\centering
    \includegraphics[height=1.3in]{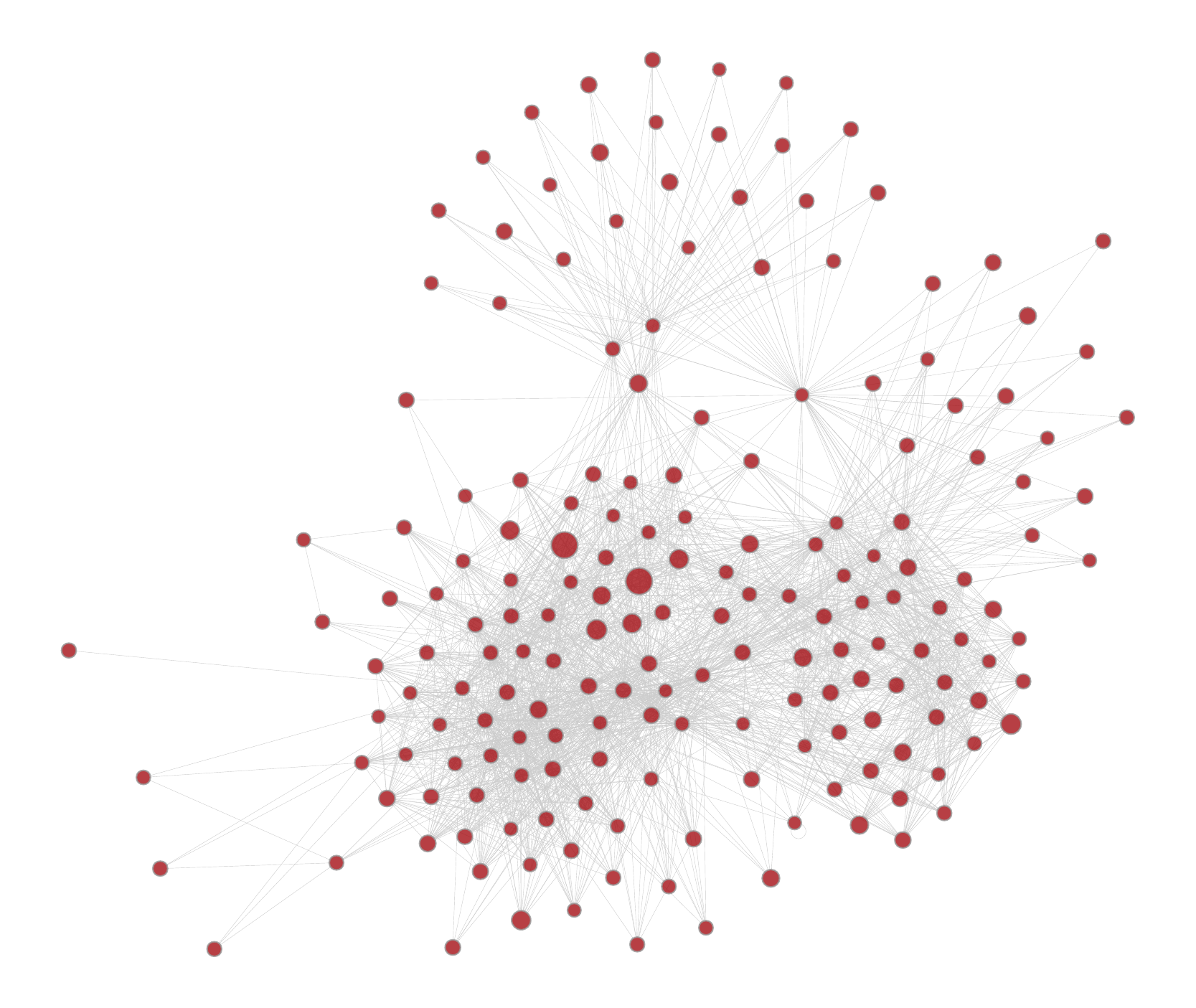}
    \caption{Foodweb where vertices are sized based on $\infty$-PageRank value.}
  \end{subfigure}
\end{minipage}
\caption{}
\label{fig:real-networks}
\end{figure}


\begin{remark}
    In a dense network, the computational complexity of calculating $\infty$-PageRank is $O(n^2)$ do to the matrix multiplication $AD^{-1}v$. In sparse networks this can be reduced to $O(n+m)$ where $m$ is the number of non-zero entries in $A$.
    In contrast, the computational complexity of calculate standard PageRank exactly $O(n^3)$. However standard computational methods such as the power method bring the complexity down to $O(k(n+m))$ on sparse networks, where $k$ is the number of iterations on the power method \cite{langville2011google}.
    Thus exact computation of $\infty$-PageRank is faster than approximation methods of standard PageRank.
    Addintionally, implementation of $\mu$-PageRank can be done by mapping the graph to its edge projection and applying standard packages. In the edge projection, a PageRank value must be calculated for every edge and thus in sparse networks will have complexity $O(k(m+n))$.
\end{remark}

\section{Clustering with $\infty$-PageRank}
\label{section:clustering}
To test the clustering capabilities of $\infty$-PageRank, we altered the algorithm PageRank-ClusteringA developed by Chung et al.~\cite{graham2010finding} to create an algorithm capable of $\infty$-PageRank clustering. In this process, we begin with a graph $G$ and its vertex set $V$. We then define an error tolerance value $tol$ that will determine when the algorithm has converged sufficiently for our needs, a jumping factor $\alpha\in(0,1)$ and a personalization vector $v_n$ for each node $n\in V$. In our experiments, we use $v_n=e_n$ where $e_n$ is the $n$th standard unit vector. Now for each node, we compute 
\begin{equation*}
    \rho_n = \rho_\infty(n,v_n) = (1+(1-\alpha))^{-1}v_n+(1-\alpha)(1+(1-\alpha))^{-1}AD^{-1}v_n
\end{equation*}
This is the $\infty$-PageRank vector for the node $n$ using personalization vector $v_n$. From here, we randomly choose $k$ centers and assemble their PageRank vectors \{$\rho_n$ for $n\in C_k$\} into a matrix $\Gamma$ where $\Gamma_i$ is a PageRank vector $\rho_n$. We are now ready to begin sorting the remaining $n-k$ nodes into our $k$ clusters. For each node $n$ we compute the \textit{PageRank Distance} of that node to each of the centers $\Gamma_i$ which is defined in \cite{graham2010finding} as 
\begin{equation*}
    \text{distance}_{n,i} = ||\rho_nD^{-1/2}-\Gamma_iD^{-1/2}||
\end{equation*}
From here we define a list of node labels $L$. The label of a node $n$ is determined by the value of $i$ which minimizes $\text{distance}_{n,i}$. We now redefine $\Gamma_i$ as the average of the PageRank vectors \{$\rho_n$|$L_n=i$\}, and we restart the iteration with our newly defined $\Gamma$. The iteration ends when the error between the previous version of $\Gamma$ and the most recent version of $\Gamma$ (which we will call $\Omega$) is within the tolerance we set at the beginning. At the end of the iteration, the labels assigned by $L$ are considered to be our final label estimates. The pseudocode for this process is given in Algorithm \ref{alg:cap}. 

\begin{algorithm}
\caption{$\infty$-PageRank Clustering Algorithm}\label{alg:cap}
\begin{algorithmic}
\State $V \gets$ Vertex set of Graph;\Comment{Initialize Variables}
\State $k \gets$ Number of Clusters;
\State $tol \gets$ Tolerance for Error;
\State $v_n \gets$ Personalization vector for node $n$;
\State $L \gets$ Collection of node labels;
\For{$n$ in $V$}
    \State $\rho_n \gets \rho_{\infty}(n,v_n)$ \Comment{Calculate $\infty$-PageRank with $v_n$}
\EndFor
\State $\Gamma \gets$ PageRank vectors of $k$ randomly selected nodes
\State error $\gets \infty$
\While{error $> tol$}
    \For{$n$ in $V$}
        \State distances$ \gets [0,0,...]$\Comment{Initialize list of distances of size $k$}
        \For{$i$ in range($k$)}
            \State $\text{distance}_i \gets\text{dist}(\rho_n,\Gamma_i)$ \Comment{Calculate PR-dist between node's PageRank vector and each cluster center}
        \EndFor
        \State $L_n \gets\text{argmin}(\text{distances})$\Comment{Label node based on which center it is closest to}
    \EndFor
    \State $C_i \gets$Collection of PageRank vectors of nodes with label $i$
    \State $\Omega_i \gets$Zero array of size $k$
    \For{$i$ in range($k$)}
        \State $\Omega_i \gets \text{ Average}(C_i)$\Comment{Calculate new cluster centers}
    \EndFor
    \State error $\gets \|\Gamma-\Omega\|$\Comment{Calculate difference in old and new cluster centers}
    \State $\Gamma\gets\Omega$ \Comment{Update centers}
\EndWhile
\end{algorithmic}
\end{algorithm}

\newpage

 We tested this algorithm on stochastic block matrices and the network containing 114 NCAA Division I American collegiate football teams, connected by the schools that had played against each other \cite{girvanfootball}. We will first discuss the outcome of the randomly generated stochastic block matrices. These were created by specifying a graph with $k$ clusters containing $n_i$ nodes in the $i$th cluster. The main metric we consider here is $c_{in}-c_{out}$, where $c_{in}$ is the probability of intra-cluster connections and $c_{out}$ is the probability of inter-cluster connections. So when $c_{in}-c_{out}=0$, it is equally likely for the node to be connected to a node within its cluster as it is to be connected to a node outside of its cluster. The lower $c_{in}-c_{out}$ values describe graphs without clear boundaries between clusters, making the clusters harder to detect. In Figure \ref{fig:clustering}, we show a stochastic block matrix (SBM) network with $c_{in}-c_{out}=0.8$ with the correct labels and with the clustering labels. Of the 90 nodes in this graph, approximately 89 were correctly clustered, giving an NMI of 0.98.
In total, we ran 450 clustering trials on randomly generated stochastic block matrices, and we charted the relationship between $c_{in}-c_{out}$ values and clustering algorithm accuracy measured via NMI (see Figure \ref{fig:clustering}). We see that there is almost a logarithmic relationship between the $c_{in}-c_{out}$ values and the accuracy of the clustering algorithm.  

\begin{figure}
\centering
\includegraphics[width=\textwidth]{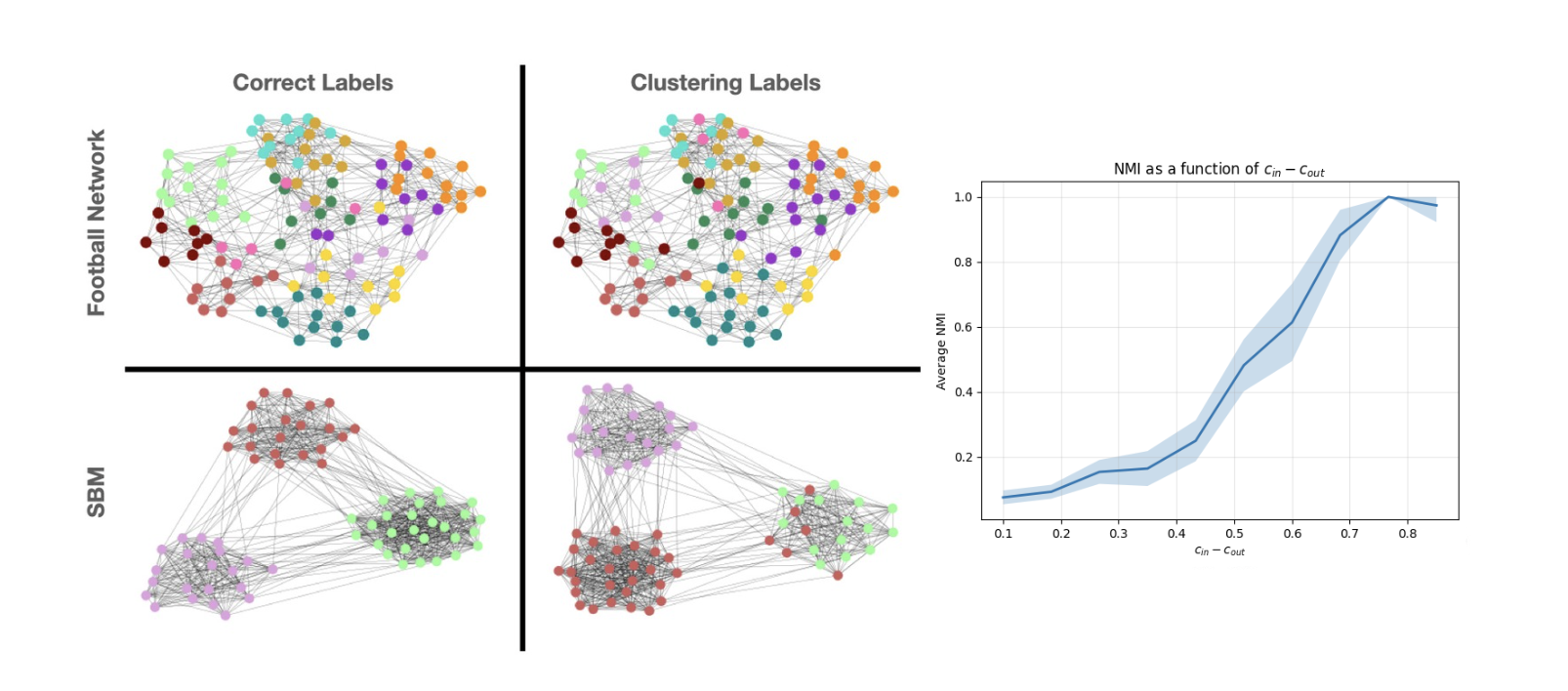}
\caption{The top row is the football network and the bottom row is a random stochastic block network.
The networks in the left column are the correctly labeled networks and the right are clustered networks.
The plot on the right shows the algorithm accuracy as measured by NMI as a function of the clustering strength in the network. The blue band shows the 95\% confidence interval.}
\label{fig:clustering}
\end{figure}

We now review the success of our algorithm on the Football network. This network of 114 nodes is a commonly used benchmark graph for new clustering algorithms. The goal is to be able to identify what conference each team corresponds to, thereby identifying 12 clusters. In Figure \ref{fig:clustering} the correct labels are shown, as well as the labels assigned by our algorithm. Our algorithm performed fairly well, with a normalized mutual information value of $0.8625$. For comparison, greedy modularity and the Louvain algorithm score $0.6977$ and $0.8505$ respectively \cite{barabasi2013network}.

\section{Conclusion}

We have investigated the relationship between PageRank and its variants as well as various properties of its variants.
Specifically, we have shown PageRank is equivalent to its variants in regular and bipartite biregular graphs.
Using these variants, we defined a new centrality measure, $\infty$-PageRank. 
Building off of the PageRank clustering work done by Chung  and Tsiatas \cite{graham2010finding}, we defined a new clustering algorithm based off of $\infty$-PageRank.

In addition to our findings, we have presented various conjectures which could prove vital for studying PageRank variants.
As future work, we hope to investigate these conjectures further to create more efficient comparison between standard and non-backtracking PageRank.\\

\noindent{\textbf{Data Availability Statement:}} The data and associated analysis tools from this paper are publicly available at \texttt{https://github.com/coryglover/backtracking\_pagerank}. The college football dataset used in the clustering analysis was taken from Mark Newman's public collection of network datasets at \texttt{http://www-personal.umich.edu/~mejn/netdata/}.\\

\noindent{\textbf{Declarations:}}  No authors have any financial or other conflicting interests.  All authors contributed to the conception, analysis, data generation, and writing of the manuscript.

\bibliographystyle{plain}
\bibliography{ref}

\pagebreak
\section*{Appendix}
\label{appendix}

In Section \ref{subsec:top-nodes} we measure the standard PageRank and $\infty$-PageRank distributions for six network models: (1) random graph model with parameters ($N=10^3$, $p=0.04$) \cite{erdds1959random}, (2) hypersoft configuration model with parameters ($N=10^3, \langle k\rangle=40, \gamma=2.1)$ \cite{voitalov2019scale}, random geometric graph with parameters ($N=10^3$, $r=0.04$) \cite{penrose2003random}, stochastic block model withh parameters ($N=10^3$, $C=10$, $p_{in}=0.04$, $p_{out}=0.001$) \cite{karrer2011stochastic}, Barabási-Albert preferential attachment model with parameters ($N=10^3$, $m=2$) \cite{barabasi1999emergence}, and a hyperbolic random graph model with parameters ($N=10^3$, $\beta=.5$, $\langle k\rangle=10$, $\gamma=2.01$) \cite{papadopoulos2012popularity}.
In Figure \ref{fig:models} we show the standard PageRank distribution (blue) and $\infty$-PageRank distribution (red) for each model.
We then examine the correlation plots between the standard and $\infty$-PageRank values in Figure \ref{fig:correlation-model} where we see strong correlation for heterogeneous networks specifically at high degree nodes.
In Figure \ref{fig:kendall-tau}a, we measure the Kendall-Tau correlation for the top $k$ percent of nodes in each model \cite{schaeffer1956concerning}.

In Section \ref{subsec:top-nodes} we measure $\infty$-PageRank on nine real world networks taken from the Netzschleuder database \cite{peixoto2020netzschleuder}.
Each network and its relevant statistics are found in Table \ref{table:networks}.
We then plot the correlation plots between standard and $\infty$-PageRank for each network in Figure \ref{fig:correlation-real-networks}.
We additionally measure the Kendall-Tau correlation for the top $k$ percent of nodes in each network in Figure \ref{fig:kendall-tau}b.

\begin{figure}
    \centering
    \begin{subfigure}{.45\linewidth}\centering
        \includegraphics[width=\linewidth]{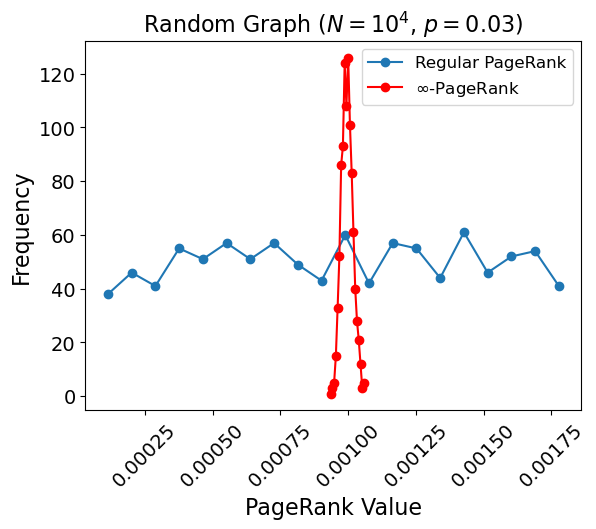}
        \caption{Random Graph}
    \end{subfigure}
    \begin{subfigure}{.45\linewidth}\centering
        \includegraphics[width=\linewidth]{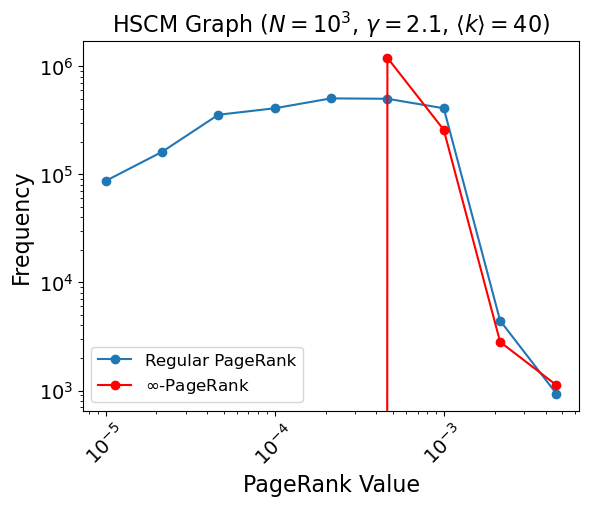}
        \caption{Hypersoft Configuration Model Graph}
    \end{subfigure}
    \begin{subfigure}{.45\linewidth}\centering
        \includegraphics[width=\linewidth]{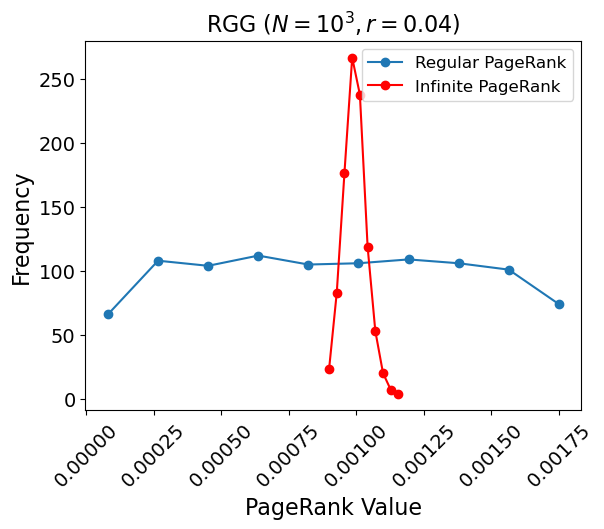}
        \caption{Random Geometric Graph}
    \end{subfigure}
    \begin{subfigure}{.45\linewidth}\centering
        \includegraphics[width=\linewidth]{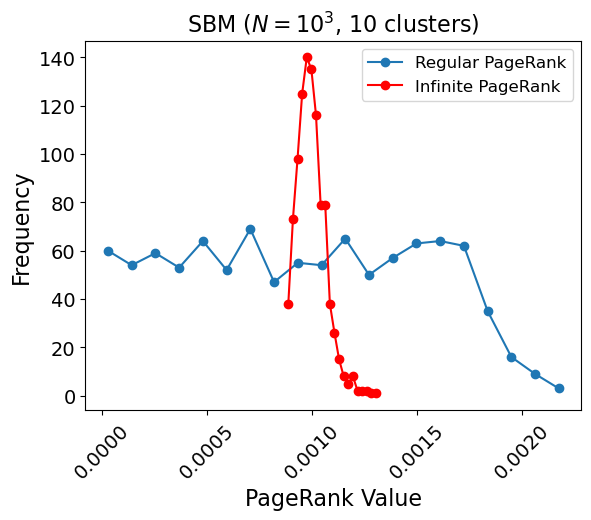}
        \caption{Stochastic Block Model Graph}
    \end{subfigure}
    \begin{subfigure}{.45\linewidth}\centering
        \includegraphics[width=\linewidth]{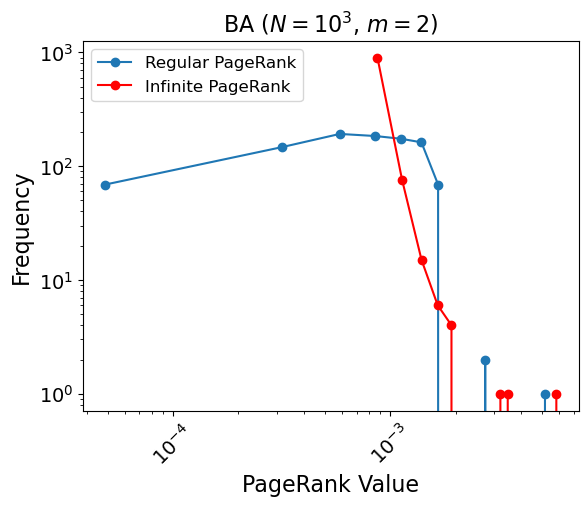}
        \caption{Barabási-Albert Graph}
    \end{subfigure}
    \begin{subfigure}{.45\linewidth}\centering
        \includegraphics[width=\linewidth]{hyperbolic_dist.png}
        \caption{Hyperbolic Random Graph}
    \end{subfigure}
    \caption{Standard versus $\infty$-PageRank Distributions}
    \label{fig:models}
\end{figure}

\begin{figure}
    \centering
    \begin{subfigure}{.45\linewidth}\centering
        \includegraphics[width=\linewidth]{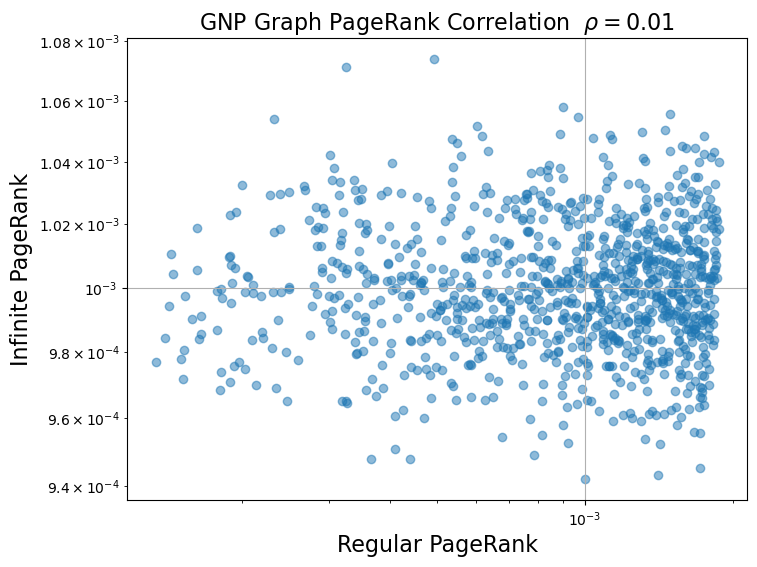}
        \caption{Random Graph}
    \end{subfigure}
    \begin{subfigure}{.45\linewidth}\centering
        \includegraphics[width=\linewidth]{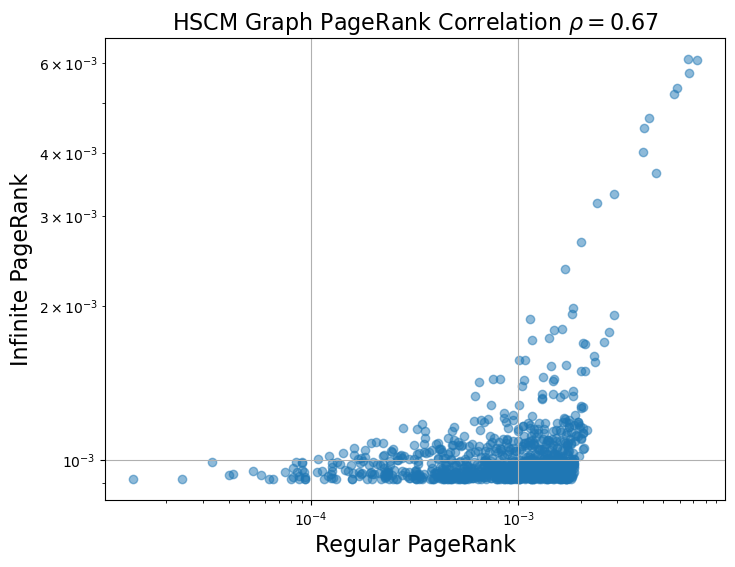}
        \caption{Hypersoft Configuration Model Graph}
    \end{subfigure}
    \begin{subfigure}{.45\linewidth}\centering
        \includegraphics[width=\linewidth]{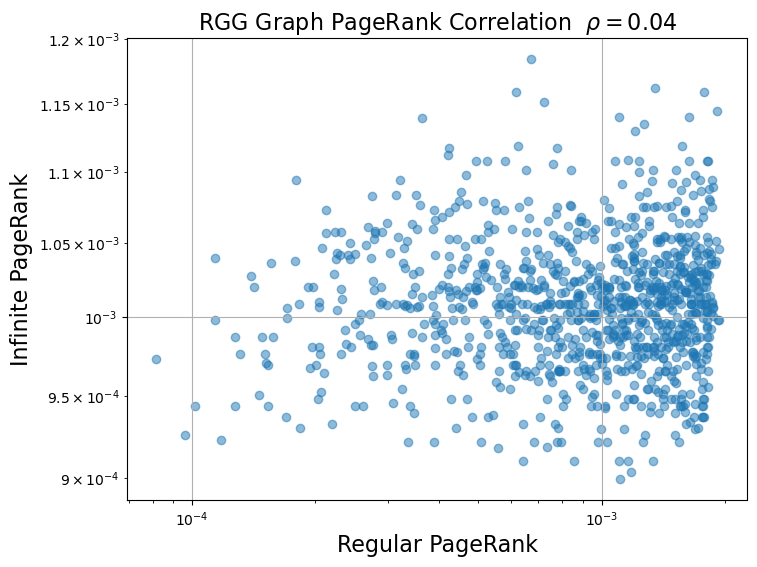}
        \caption{Random Geometric Graph}
    \end{subfigure}
    \begin{subfigure}{.45\linewidth}\centering
        \includegraphics[width=\linewidth]{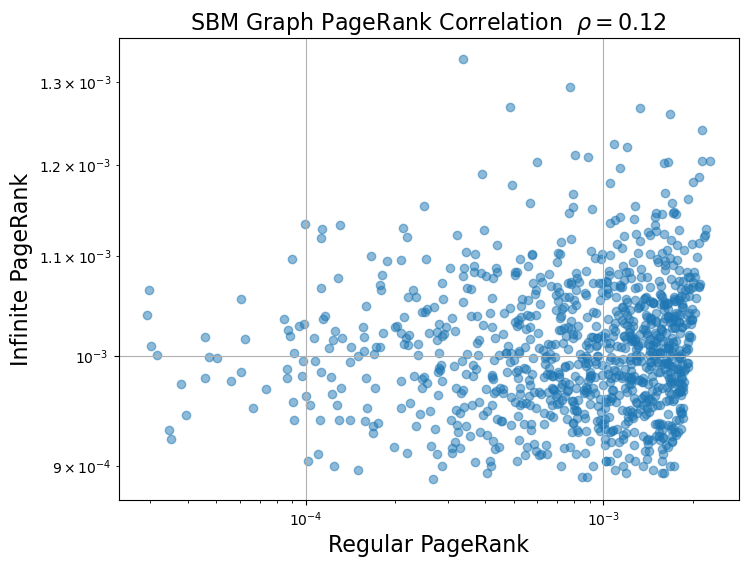}
        \caption{Stochastic Block Model Graph}
    \end{subfigure}
    \begin{subfigure}{.45\linewidth}\centering
        \includegraphics[width=\linewidth]{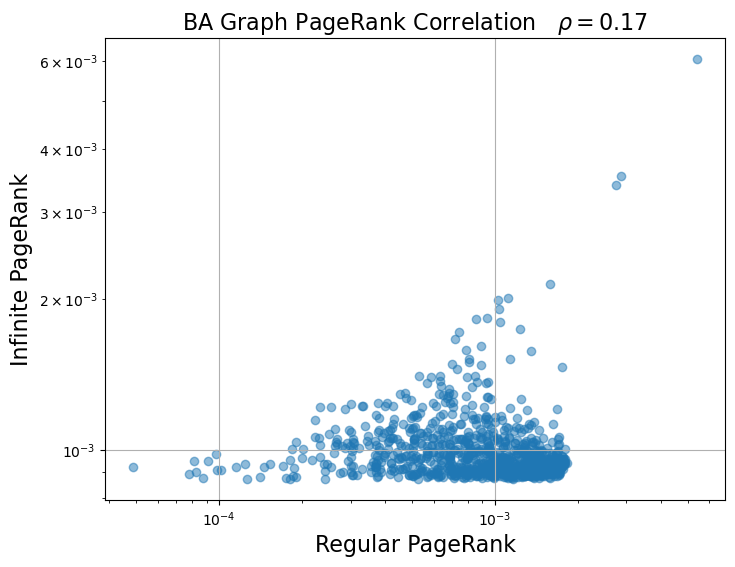}
        \caption{Barabási-Albert Graph}
    \end{subfigure}
    \begin{subfigure}{.45\linewidth}\centering
        \includegraphics[width=\linewidth]{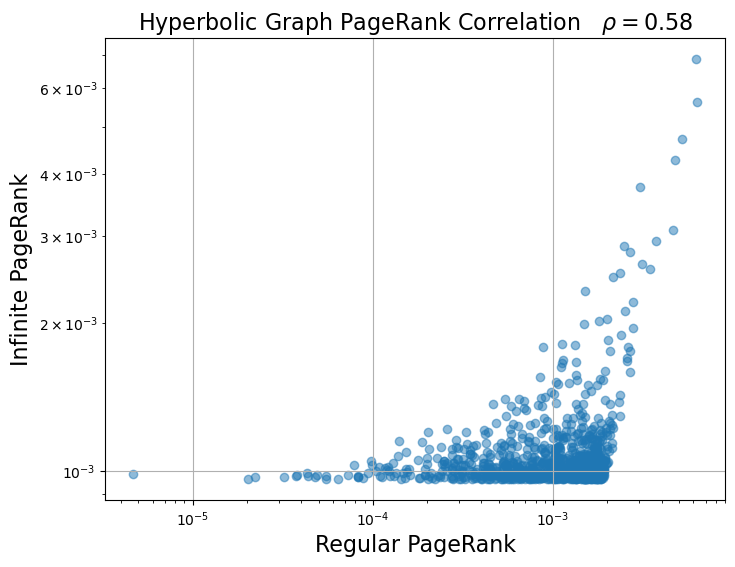}
        \caption{Hyperbolic Random Graph}
    \end{subfigure}
    \caption{Standard versus $\infty$-PageRank Correlation Plots}
    \label{fig:correlation-model}
\end{figure}

\begin{figure}
    \centering
    \begin{subfigure}{.45\linewidth}\centering
        \includegraphics[width=\linewidth]{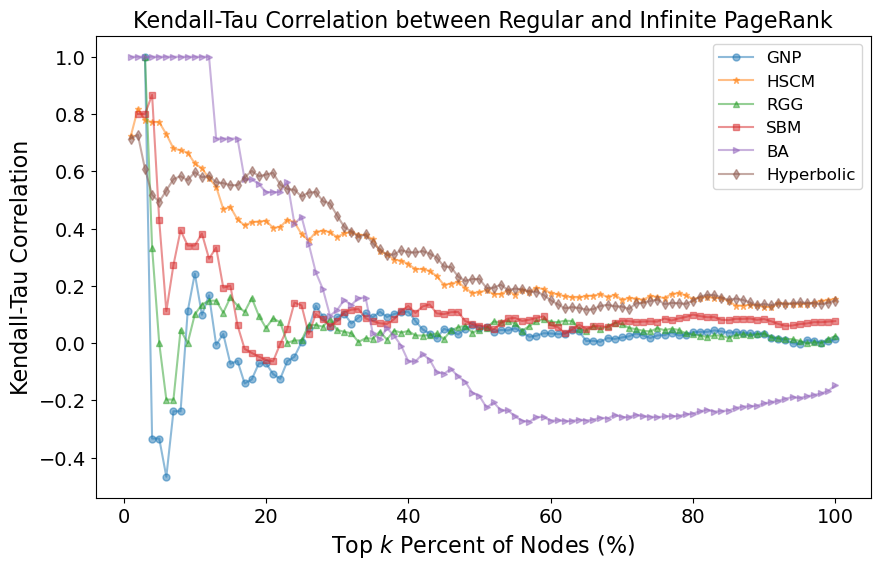}
        \caption{Graph Models}
    \end{subfigure}
    \begin{subfigure}{.45\linewidth}\centering
        \includegraphics[width=\linewidth]{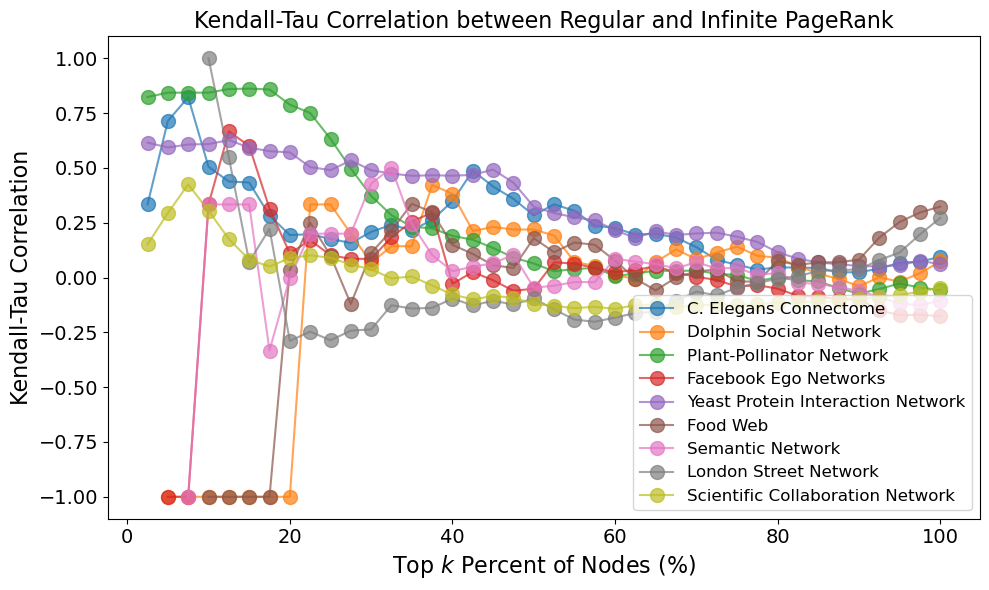}
        \caption{Real Networks}
    \end{subfigure}
    \caption{Kendall-Tau Correlation for top $k$\% of vertices}
    \label{fig:kendall-tau}
\end{figure}

\begin{table}[b]
    \centering
    \begin{tabular}{c|c|c|c|c}
        Network & $N$ & $L$ & $\langle k\rangle$ & $\langle k^2\rangle$ \\
        \hline
        C.~Elegans Connectome & 585 & 1755 & 6 & 81.12\\
        Dolphin Social Network & 62 & 159 & 5.13 & 34.9\\
        Plant Pollinator Network & 772 & 1206 & 3.12 & 95.36 \\ 
        Facebook Ego Network & 333 & 2519 & 15.13 & 469.53 \\
        Yeast Transcription Network & 916 & 1094 & 2.39 & 3.32\\
        Little Rock Food Web & 183 & 2494 & 27.26 & 292.67\\
        Word Adjacencies of David Copperfield & 112 & 425 & 7.59 & 104.54 \\
        London Street Network & 488 & 730 & 2.99 & 9.81 \\
        ArXiV Collaboration Network & 16726 & 47594 & 5.69 & 73.57
    \end{tabular}
    \caption{Network statistics for real networks analyzed}
    \label{table:networks}
\end{table}

\begin{figure}
    \centering
    \begin{subfigure}{.32\linewidth}\centering
        \includegraphics[width=\linewidth]{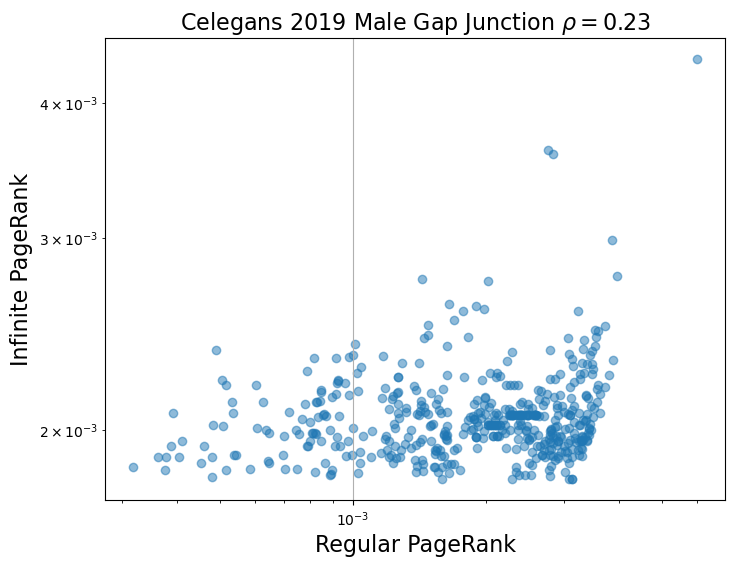}
        \caption{C.~Elegans Connectome}
    \end{subfigure}
    \begin{subfigure}{.32\linewidth}\centering
        \includegraphics[width=\linewidth]{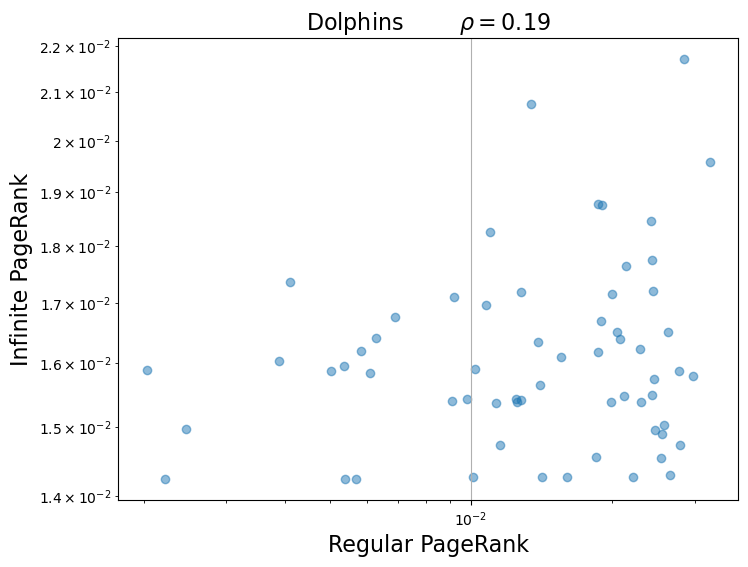}
        \caption{Dolphin Social Network}
    \end{subfigure}
    \begin{subfigure}{.32\linewidth}\centering
        \includegraphics[width=\linewidth]{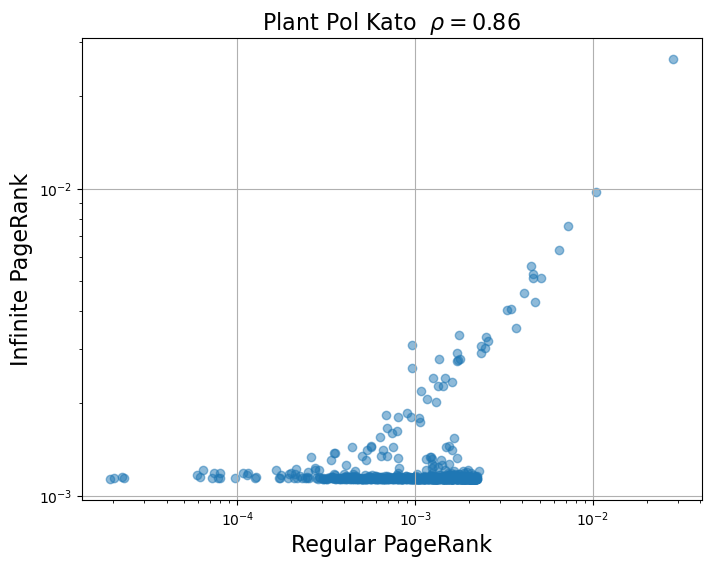}
        \caption{Plant-Pollinator Network}
    \end{subfigure}
    \begin{subfigure}{.32\linewidth}\centering
        \includegraphics[width=\linewidth]{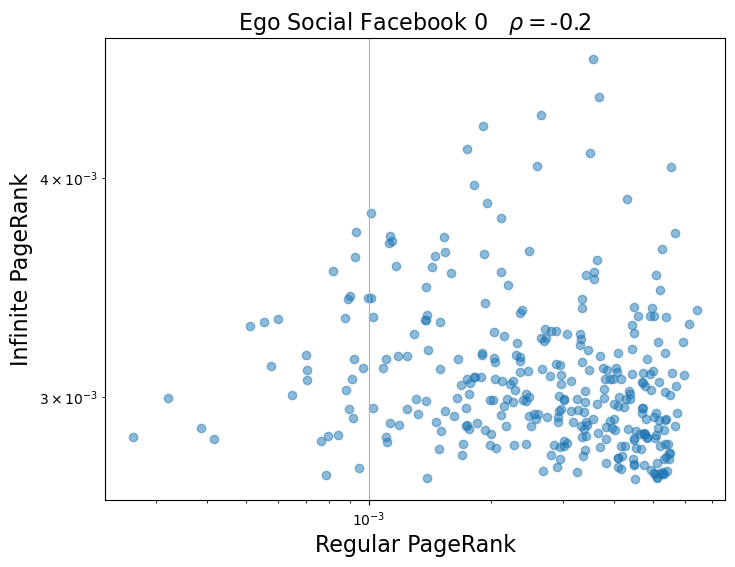}
        \caption{Facebook Ego Network}
    \end{subfigure}
    \begin{subfigure}{.32\linewidth}\centering
        \includegraphics[width=\linewidth]{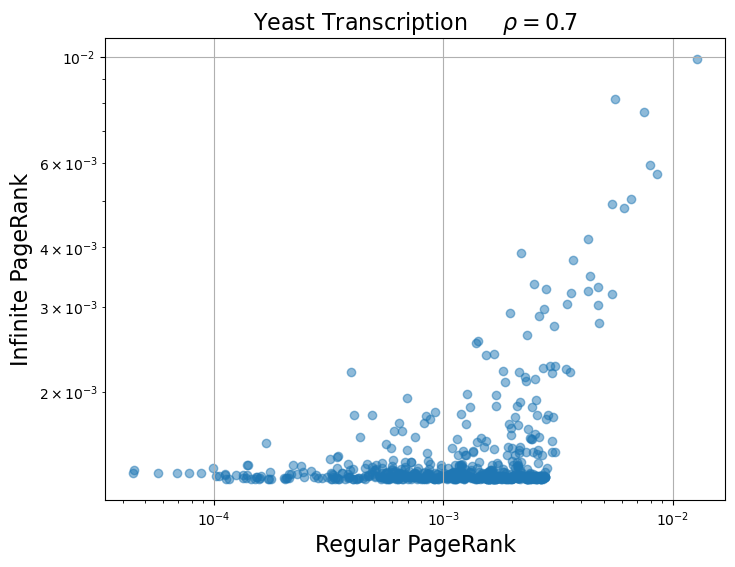}
        \caption{Yeast Transcription Network}
    \end{subfigure}
    \begin{subfigure}{.32\linewidth}\centering
        \includegraphics[width=\linewidth]{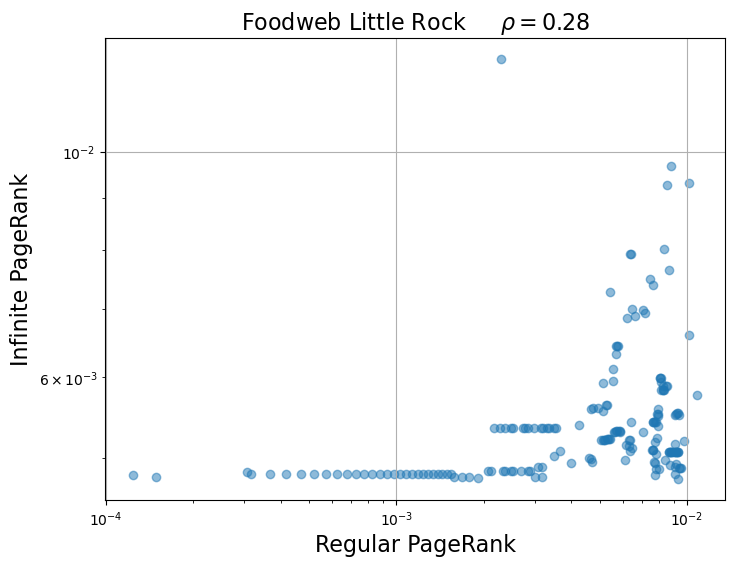}
        \caption{Little Rock Food Web}
    \end{subfigure}
    \begin{subfigure}{.32\linewidth}\centering
        \includegraphics[width=\linewidth]{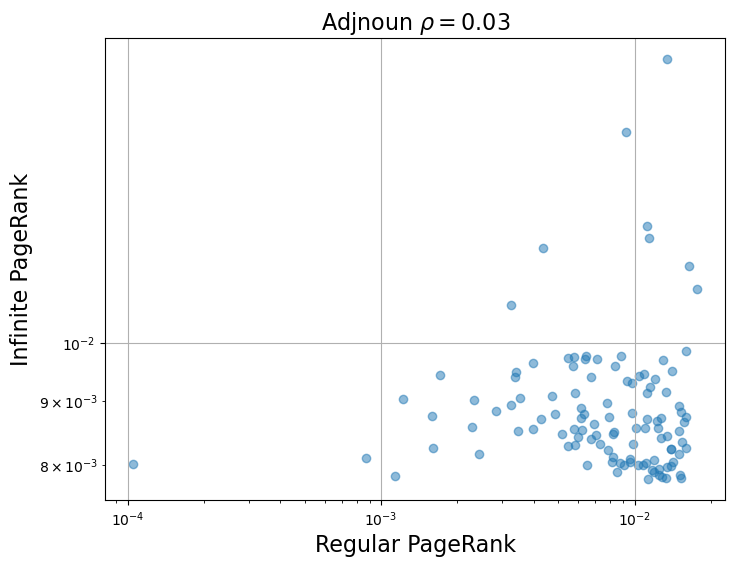}
        \caption{Semantic Network}
    \end{subfigure}
    \begin{subfigure}{.32\linewidth}\centering
        \includegraphics[width=\linewidth]{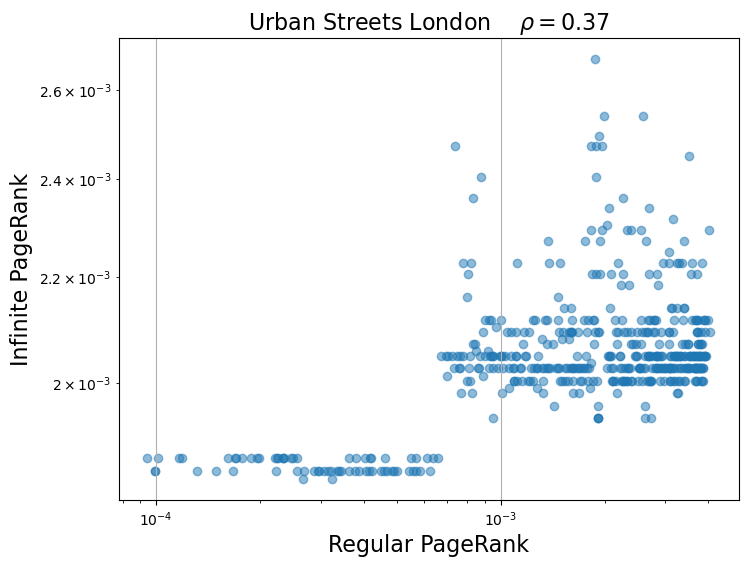}
        \caption{London Street Network}
    \end{subfigure}
    \begin{subfigure}{.32\linewidth}\centering
        \includegraphics[width=\linewidth]{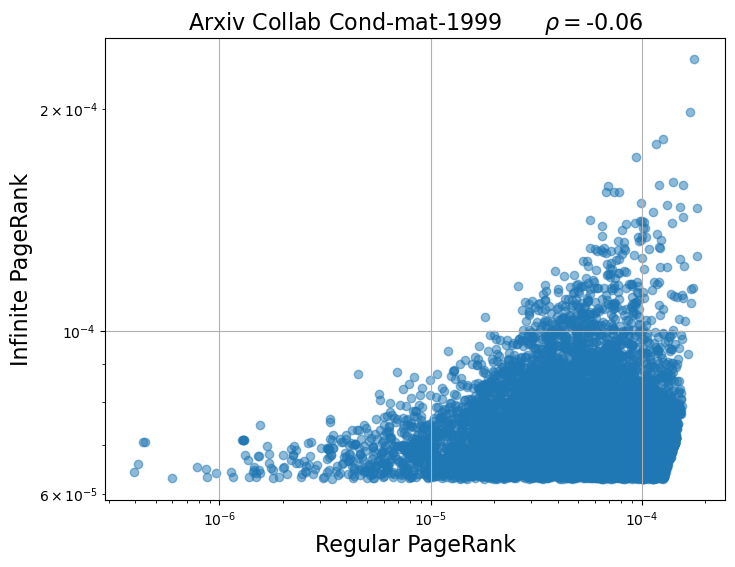}
        \caption{Scientific Collaboration Network}
    \end{subfigure}
    \caption{Standard versus $\infty$-PageRank Correlations}
    \label{fig:correlation-real-networks}
\end{figure}

\end{document}